\titlespacing*{\section}{0pt}{1\baselineskip}{\baselineskip}
\titlespacing*{\subsection}{0pt}{1\baselineskip}{\baselineskip}
\newtheorem{theorem}{Theorem}
\newtheorem{remark}{Remark}
\newtheorem{definition}{Definition}
\newtheorem{corollary}{Corollary}
\newtheorem{lemma}{Lemma}
\DeclareMathOperator*{\argminA}{arg\,min}
\DeclareMathOperator*{\argmaxA}{arg\,max}
\algnewcommand\algorithmicforeach{\textbf{for each}}
\renewcommand\algorithmicdo{}
\author{Mandar Datar}
\begin{document}

\newtheorem{assumption}{Assumption}
\title{Fisher Market Model based Resource Allocation for 5G Network Slicing 
}

\author{Mandar Datar\IEEEauthorrefmark{1}\thanks{Corresponding author:mandar.k.datar@gmail.com, this work was supported by Nokia Bell Labs and MAESTRO-5G-ANR. Part of the research described in this article has been included in the PhD thesis of the first author \cite{datar2022resource}, who was affiliated with INRIA Sophia Antipolis-Méditerranée and LIA at Avignon University in France. Presently, the author is associated with Orange Innovation, Châtillon, France}, Naresh Modina\IEEEauthorrefmark{2}, Rachid El Azouzi\IEEEauthorrefmark{4}, Eitan Altman\IEEEauthorrefmark{3}\IEEEauthorrefmark{4}, \\

\IEEEauthorrefmark{1}Orange Innovation, Châtillon, France
\IEEEauthorrefmark{2} (CEDRIC) CNAM - Paris, France

\IEEEauthorrefmark{4}CERI/LIA, University of Avignon, Avignon, France
\IEEEauthorrefmark{3}INRIA Sophia Antipolis-Méditerranée, France}

\maketitle
\begin{abstract}
  Network slicing (NS) is a key technology in 5G networks that enables the customization and efficient sharing of network resources to support the diverse requirements of next-generation services. This paper proposes a resource allocation scheme for NS based on the Fisher-market model and the Trading-post mechanism. The scheme aims to achieve efficient resource utilization while ensuring multi-level fairness, dynamic load conditions, and the protection of service level agreements (SLAs) for slice tenants. In the proposed scheme, each service provider (SP) is allocated a budget representing its infrastructure share or purchasing power in the market. SPs acquire different resources by spending their budgets to offer services to different classes of users, classified based on their service needs and priorities. The scheme assumes that SPs employ the $\alpha$-fairness criteria to deliver services to their subscribers. The resource allocation problem is formulated as a convex optimization problem to find a market equilibrium (ME) solution that provides allocation and resource pricing. A privacy-preserving learning algorithm is developed to enable SPs to reach the ME in a decentralized manner. The performance of the proposed scheme is evaluated through theoretical analysis and extensive numerical simulations, comparing it with the Social Optimal and Static Proportional sharing schemes.
\end{abstract}

\begin{IEEEkeywords}
 5G network slicing, Resource allocation, Fisher Market, Market Equilibrium, Trading post mechanism, Decentralized learning. 
\end{IEEEkeywords}

\section{Introduction}
Communication technology has been playing an essential role in society’s digitalization and is
a significant contributor to a growing economy worldwide. Looking towards the future,
indeed, the next-generation wireless network is expected to grow and extend its  support in a whole new generation of applications like Augmented Reality (AR), Virtual Reality (VR) live broadcast, Internet of things (IoT), Autonomous driving, remote healthcare, automated manufacturing based on smart factories, etc.
A critical concern in integration of emerging sectors into a current wireless network is their heterogeneous and conflicting needs that the existing monolithic network is insufficient to meet. 
For example, automotive and healthcare applications require ultra-reliable services or extremely low latency,
whereas VR-live broadcast needs ultra-high-bandwidth communication.
\par
\par Several new concepts have been introduced for the upcoming 5G network design to satisfy these critical needs. Out of those, probably one of the most important ones is ``network slicing" \cite{GSMA}. It consists of virtualizing the physical resources and logically partitioning them with the help of technologies such as Software-Defined Networking (SDN) and Network Function Virtualization (NFV) \cite{ordonez2017network}. Each logical partitioned part is referred to as a slice and is tailored to meet the specific application/service needs. Generally, Infrastructure providers (InPs) own these resources and provide them to SPs through a dedicated slice. The slice-based provisioning is at the core of empowering SPs to manage the performance of their own dynamic and mobile user load locally. Network slicing enables slice tenants to share the same physical infrastructure flexibly and dynamically, which helps to utilize the resources more efficiently and economically. Though network slicing comes up with many advantages, it brings new challenges too. 
\par When networking architecture is based on shared resources, and SPs or tenants share a common infrastructure to support their service provision, the security and scalability of virtual networks are major concerns. Naturally, SPs request logical, independent, and isolated slices with complete service level agreements (SLA) protection. One of the simplest allocation strategies that offer SPs a guarantee of slice-level protection is static partitioning \cite{guo2013active}, where each resource required by the SPs is shared among them depending on their network stake or SLAs. However, this approach fails to provide load-driven flexibility when the service providers' user loads may vary with time and can be spatially inhomogeneous. In this regard, dynamic sharing of resources among SP users is one natural allocation choice that can meet the flexibility of the SPs  \cite{lee2018dynamic}\cite{altman2006survey}. Also, InPs want to maximize their return on investment by employing the dynamic sharing of resources as this lowers the capital cost and gives better resource utilization \cite{Nokia}. However, dynamic sharing of resources can expose the SPs to the risk of service-level agreement violation. Therefore, one of the critical concerns in 5G network slicing is how to efficiently and dynamically allocate limited resources to slice tenants with diverse characteristics and services while maintaining the protection of their SLAs. On top of that, most of the next-generation mobile applications/services such as AR and VR broadcast demand more data-intensive operations than those required by traditional mobile applications. Therefore, to cope with the requirements of additional computational power and memory resources for such services, cloud computing and edge computing are being integrated into the network architecture. As a result, unlike traditional mobile services where radio resources are the primary resource, a network slice is usually composed of heterogeneous resources, including radio access capacity or communication resources, edge storage memory, and computational resources etc. Thus it becomes even more challenging to design a multi-dimensional resources-sharing scheme which can ensure an ensemble of  user load-driven flexibility, protection of SLA and better network efficiency.

\par To address this issue, we propose a Fisher market (FM)-based resource allocation scheme, where market agents \emph{i.e.} SPs are assigned with fixed budgets or share of infrastructure according to their SLA. The InP sets the prices for the resources. Given prices announced by InP, the SPs distribute their budgets over different resources at different locations to procure the optimal bundle of heterogeneous resources required to support their services. In this work, we use market equilibrium solution approach to provide stable allocation and resource pricing. At the ME, the market is cleared, i.e., demand meets supply, and every agent is satisfied with allocated resources. To make the proposed resource allocation scheme practically viable, we implement it via the Trading post (TP) mechanism. This type of distributed approach protects the sensitive information of SPs and transfers each SP a direct control to tailor allocation by simply adjusting its bids. The required resources are allocated to SPs proportional to their bids. The proposed approach regulates the trade-off between efficient resource utilization and the degree of protection to SLA. On the one hand, it enables dynamic sharing, where tenants can redistribute their network share
based on the dynamic load; on the other hand, it also provides the SPs degree of protection by keeping the pre-assigned share intact throughout the allocation process.  
\subsection{Related work}
In this work, we model the resource allocation problem for 5G network slicing as a Fisher market where SPs act as consumers who purchase the different resources available at geographically distributed locations as goods. Computing the equilibrium to the FM is a challenging problem and has been the subject of much interest in the theoretical computer science community \cite{harrison1996computing}. Eisenberg and Gale in \cite{eisenberg1961aggregation} \cite{eisenberg1959consensus} and their generalization \cite{jain2007eisenberg} showed that if the utilities of agents in the market are the homogeneous function \footnote{A function is called as a homogeneous function of any degree ‘k’ if; when each of its elements is
multiplied by any number $t > 0$; then the value of the function is multiplied by $t^{k}$
.} of degree one. In that case, the market equilibrium can be found by solving a convex optimization programme, also widely known as Esenbeg-Gale (EG) program. It has been observed that the EG program also achieves proportional fairness \cite{kaneko1979nash} or optimum Nash social welfare  \cite{nash20164} among the market agents

\par An approach such as the EG program provides a centralized solution to find equilibrium; however, they do not represent the markets or the equilibrium concepts where agents practically interact with each other. Thus, the algorithmic game theory community has always been interested in designing algorithms that could plausibly describe the markets and the equilibrium concepts and allow agents to reach equilibrium \cite{roughga,devanur2008market,garg2015complementary,vazirani2011market,chen2016incentives}. For example, over a century ago, Walras \cite{walras1896elements,cheung2020tatonnement} proposed the most intuitive and natural algorithm, ``tâtonnement ", to find the equilibrium of a market. In this algorithm the price of resource increases with demand exceeding capacity and vice versa. The shortcomings of this type of approach are that first, they do not provide any causal relation between the prices and demand. In fact, price depend on agents' demand, and their demand depends on resource price. Second, total demand by agents may exceed capacity while applying the procedure, so managing the excess demand is critical.

\par The Shapley and Shubik \cite{tradingpost} sought to answer issues through the Trading post mechanism. The same mechanism has been discovered several times with different application domains, for example, the Kelly mechanism \cite{kelly1997charging} in computer networks, and proportional share scheme by Feldman et al. \cite{feldman2008proportional} in computer systems. The TP-mechanism provides an effective answer to many questions, however reaching equilibrium via the TP-mechanism is still challenging. Over the past decade, much attention has been given to designing algorithms to get the Fisher market equilibrium via TP-mechanism. Zang et al. in \cite{ZHANG20112691} showed that when CES\footnote{Constant elasticity of substitution (CES) utiltiy, $u(x_i)=\left(\sum_{j}a_{ij}{x_{ij}}^{\rho}\right)^{\frac{1}{\rho}},\text{if}\;0<\rho\leq 1$ signifies substitution relationship ;if $\infty\leq \rho\leq 0$ is complemtary relationship} utilities with substitute relationships determine resource demands of market agents, proportional response dynamics converge to FM equilibrium. Recently, in \cite{FisherDynamics}, Cheung et al. extended the above work to a case where any CES utility functions can determine market agents' demand and developed the distributed proportional dynamics to find the equilibrium of the FM.

\par The first step toward the multi-resource allocation problem in multi-server computing resources management was made in \cite{DRF1,DRF2,DRF3,DRF4,DRF5}. All these works proposed Dominant Resource Fairness (DRF) as criteria for multi-resource allocation. Recently, in \cite{Fossati} Fossati et al. studied the multiple resource allocation for network slicing under different fairness criteria such as Ordered weighted averaging (OWA), weighted proportional fairness (WPF), DRF and mood value rule. The authors proposed an Ordered weighted averaging (OWA) as fairness criteria. To balance the trade-off between inter-slice and intra-slice fairness, a new allocation criterion namely shared constrained slicing (SCS), was proposed by Zheng et al. in \cite{zheng2019elastic}.

\par 
In \cite{nguyen2018price}, Nguyen et al.  studied edge computing resource allocation problem for service as Fisher market model, where they only dealt with computation resources, considering that the linear function determines the agents' resource demand. Later in \cite{Nguyen2019AMF}, they extended the formulation to a multi-resource allocation problem by employing Leontief functions as agents' utilities. In the same vein, Moro et al. in \cite{Moro2020JointMO} cast resource allocation problem for 5G network slicing as FM, wherein apart from edge resources like computation and memory, authors also included the radio resource in the model. All three works mentioned earlier proposed the market equilibrium-based resource allocation as a solution and showed that the desired equilibrium-based allocation can be obtained by solving the EG program.
 \par In-network slicing context, similar to our current work, previously, in \cite{slicing_game}  Caballero et al. proposed the TP mechanism for bandwidth allocation problem. In their proposed scheme, tenants can customize their bandwidth demand by splitting their shares based on their geographically distributed user load. In advancement with the above work, in \cite{CaballeroTON2017} \cite{ZhengDeVeciana2018} Zheng et al. applied the same resource allocation scheme for statistical multiplexing of stochastic load. They showed that the resource allocation scheme induces a non-cooperative game, and the slices achieve efficient statistical multiplexing at the Nash equilibrium of the game. Further advancing on the same line, Caballero et al. in \cite{Guaranteed_Rate} introduced the admission control over users arrival to ensure the guaranteed service rate for slices' users. 

  \begin{table*}[]
    \centering
    \begin{tabular}{|c|c|c|c|c|c|c|}
    \hline
 \textbf{Article} & \textbf{Application} & \textbf{Fairness} &\textbf{Resources} & \textbf{Model} & \textbf{Learning} & \textbf{Equilibrium} \\
\hline
  \cite{slicing_game}\cite{CaballeroTON2017}& 5G slicing & $\alpha$-fairness & Radio & Trading Post & Best response & Nash Equilibrium \\
  \hline
 
 \cite{nguyen2018price,Nguyen2019AMF} & Fog Computing & None & Multi & Fisher Market & ADMM & Market Equilibrium \\
\hline

\cite{Moro2020JointMO} & 5G slicing & None & Multi & Fisher Market &  None & Market Equilibrium \\
\hline
 Our work & 5G slicing & $\alpha$-fairness & Multi & Fisher Market and Trading Post & Mirror decent & Market Equilibrium \\
\hline
       \end{tabular}
       \vspace{0.2cm}
   \caption{Literature review and research contribution positioning.}
    \label{tab:my_label}
\end{table*}
In literature, many works focused on developing distributed multi-resource resource allocation algorithms for network slicing. \citeauthor{Hassan} in \cite{Hassan} developed the distributed resource allocation schemes rooted in Kelly mechanism. In \cite{Panayotis} \citeauthor{Panayotis}  proposed an Alternating direction method of multipliers (ADMM) based distributed resource allocation mechanism for NS. Recently \citeauthor{Fossati_Disti} in
\cite{Fossati_Disti}, proposed decentralize 5G slice resource allocation schemes using cascade and parallel resource allocations. Above works differ from our present work as they didn't consider any budgets for system agents. Also, the last two works did not ensure any type of stability or equilibrium in the designed mechanisms. Finally, the resource allocation schemes in network slicing can be categorized into reservation-based \cite{reservationed_based1,reservationed_based2,reservationed_based3,reservationed_based4}, where resources are allocated on a reservation
basis and share-based resource allocations \cite{Moro2020JointMO,CaballeroTON2017,ZhengDeVeciana2018,Guaranteed_Rate}, which assign resources based on fixed overall shares associated to individual SPs. The advantage and disadvantages of these methods are well studied in \cite{budget_shared} 

 \par Our work is closely related to \cite{nguyen2018price,Nguyen2019AMF,Moro2020JointMO}, we also formulate the resource allocation problem as a Fisher market. However, this work departs from their works in following points. First, this work also incorporates end-user-level allocation and fairness in the model. Second, along the lines of \cite{CaballeroTON2017,ZhengDeVeciana2018,Guaranteed_Rate}, we design a distributed resource allocation scheme via a TP-mechanism that allows the SPs to reach the market equilibrium. However, above works only dealt with a single resource allocation problem. We generalize the mechanism for multiple resource type allocation. Finally, our work also extends the theoretical results from the \cite{FisherDynamics} by providing a TP-mechansim-based updating scheme to reach the ME of the Fisher market with complex utility functions.
\subsection{Main Contributions}
We list below the key contributions of our work in the paper.
\begin{enumerate}
\item In the context of network slicing, we formulate the system where the SPs need heterogeneous resources at geographically distributed locations to serve users from different service classes.
\item We cast the resource allocation problem for the aforementioned system as a Fisher market model and propose a market equilibrium as its stable solution.
\item We build a convex optimization programme whose optimal solutions provide ME for the formulated market.
\item We devise the bid updating rule vai TP mechanism that enables SPs to reach the ME in a decentralized fashion.
\item We investigate the efficiency and fairness properties of the proposed allocation scheme and perform a comparative analysis with two baseline allocation schemes: social optimal allocation and static proportional allocation schemes.
\end{enumerate}
The rest of the paper is organized as follows: Section \ref{sysytem model} introduces the system model. In Section \ref{Problem_formulation}, we cast the resource allocation problem as the Fisher market model. In Section \ref{CentralizedResAlloc} and Section \ref{ResAllocdecentra}, we provide centralized and decentralized approaches respectively to compute the equilibrium of the formulated market. Section \ref{sec:potetialfunction} is specially dedicated for developing a potential function, which is needed for developing a decentralized allocation scheme. In Section \ref{bidupdating}, we provide bid updating rule which allows SPs to reach the desired market equilibrium. In Section \ref{fairnessefficiency}, we investigate the fairness and efficiency properties of the proposed allocation scheme. In section \ref{numerical}, we validate the performance of the proposed allocation scheme with extensive numerical simulations. Finally, we conclude the paper by summarizing the results and future work.

\begin{table}
\centering
\begin{tabular}{l c p{0.5\columnwidth}}
\toprule
$\mathcal{C}:=\{1,\ldots,C\}$ & $\triangleq$ & set of base stations or cells\\
$\mathcal{S}:=\{1,\ldots,S\}$ & $\triangleq$ & set of SPs (tenants) \\
$\mathcal{R}_{c}$ & $\triangleq$ & set of resources at base station $c$  \\
$\mathcal{K}^s$ & $\triangleq$ & set of user class supported by SP $s$\\
$\mathcal{K}^s_{c}$ & $\triangleq$ & set of user class supported by SP $s$ at cell $c$\\
$B_{s}$ & $\triangleq$ & budget or network share of SP $s$ \\
$x^s_{ckr}$ & $\triangleq $ & amount of resource type $r$ allocated to SP $s$ for users from class (slice) $k$ at cell $c$\\
$x^s_{ck}$ & $\triangleq $ & spending by SP $s$ on users from class $k$ at cell $c$\\
$b^s_{ckr}$ & $\triangleq $ & spending by SP $s$ for users from class $k$ on resource type $r$ at cell $c$\\
$b^s_{ck}$ & $\triangleq $ & spending by SP $s$ on users from class $k$ at cell $c$\\

${n}^s_{ck}$ & $\triangleq$ & number of user from SP $s$ belonging to class $k$ at cell $c$\\
$D_{k} =\left ( d_{k1}\dots d_{kR} \right )$& $\triangleq$ & base demand vector for user class $k$\\
$d_{kr}$& $\triangleq$ & the minimum amount of resource type $r$ needed by a user from class $k$ to achieve unit service rate\\
$u_{\nu}$& $\triangleq$ & service rate experienced by user $\nu$\\  
$u^{s}_{ck}$& $\triangleq$ & total service rate experienced by users belonging to class $k$ in cell $c$ \\  

$U_s$ & $\triangleq$ & utility of SP s \\
$\alpha_s $& $\triangleq$ & $\alpha$ fairness parameter for SP $s$\\
$p_{cr}$& $\triangleq$ & price per unit of resource type $r$ at cell $c$.\\ 
$\Phi(b)$ & $\triangleq$ & potential function to Esenberg-Gale Program\\
$\Psi (x)$ & $\triangleq$ & objective function of Esenberg-Gale Program\\
$\Upsilon(p)$& $\triangleq$ & dual function of Esenberg-Gale Program\\
\bottomrule
\end{tabular}
\vspace{0.2cm}
\caption{Main notations used throughout the paper}
\label{tab:TableOfNotationForMyResearch}
\end{table}
\section{System Model}\label{sysytem model}
\begin{figure}
    \centering
    \includegraphics[width=0.9\columnwidth]{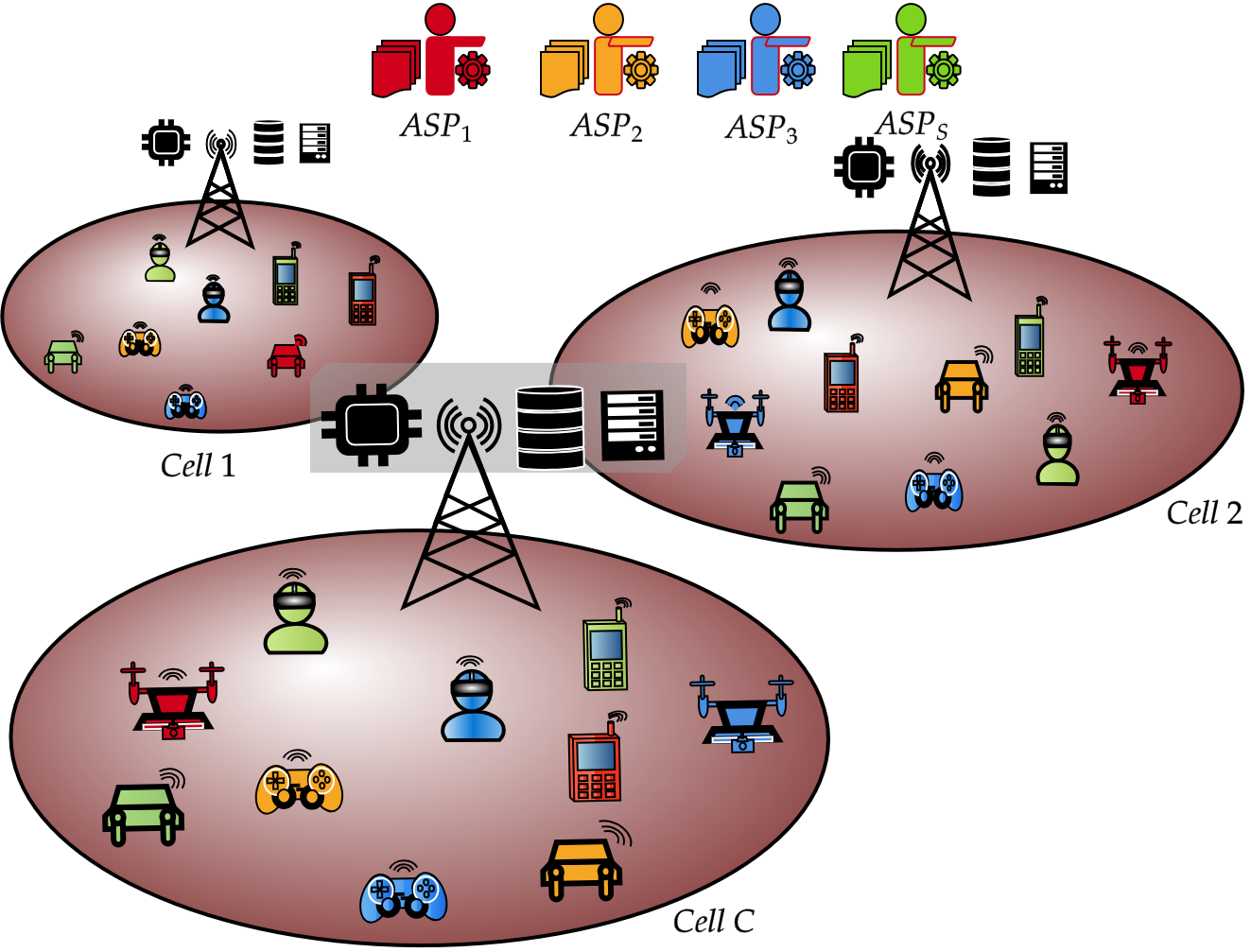}
    \caption{The SPs support the different applications at various locations through dedicated slices }
    \label{fig-system1}
\end{figure}
\begin{figure*}
\centering
\includegraphics[width=0.9\columnwidth]{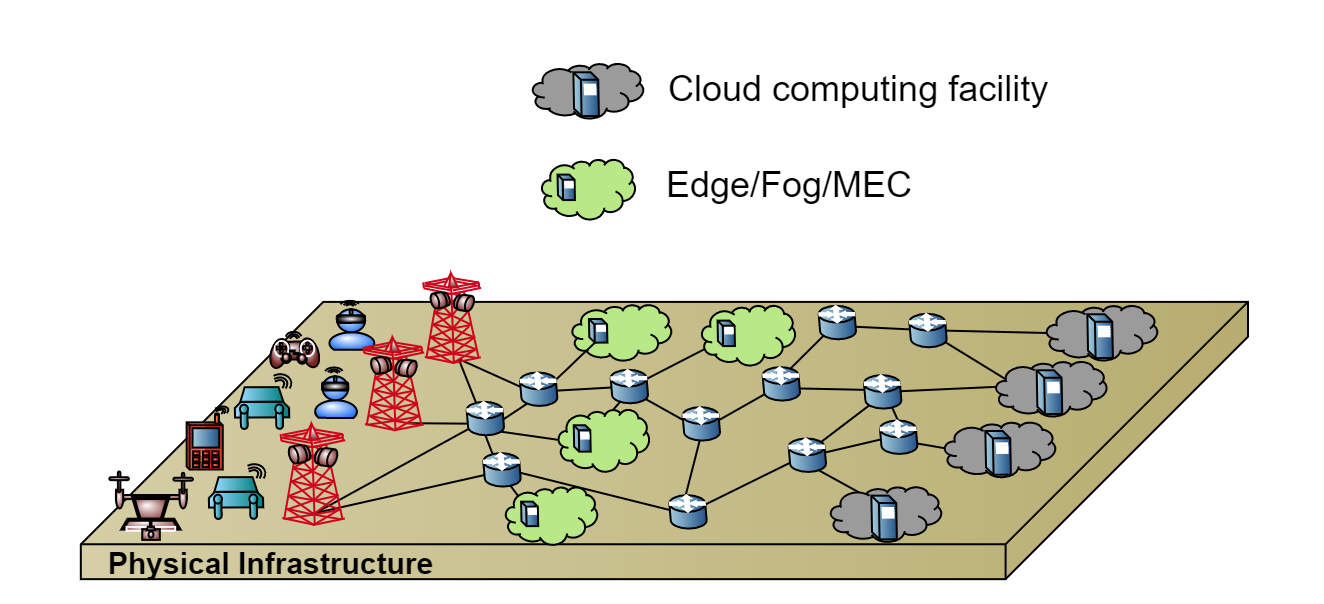}
\includegraphics[width=0.9\columnwidth]{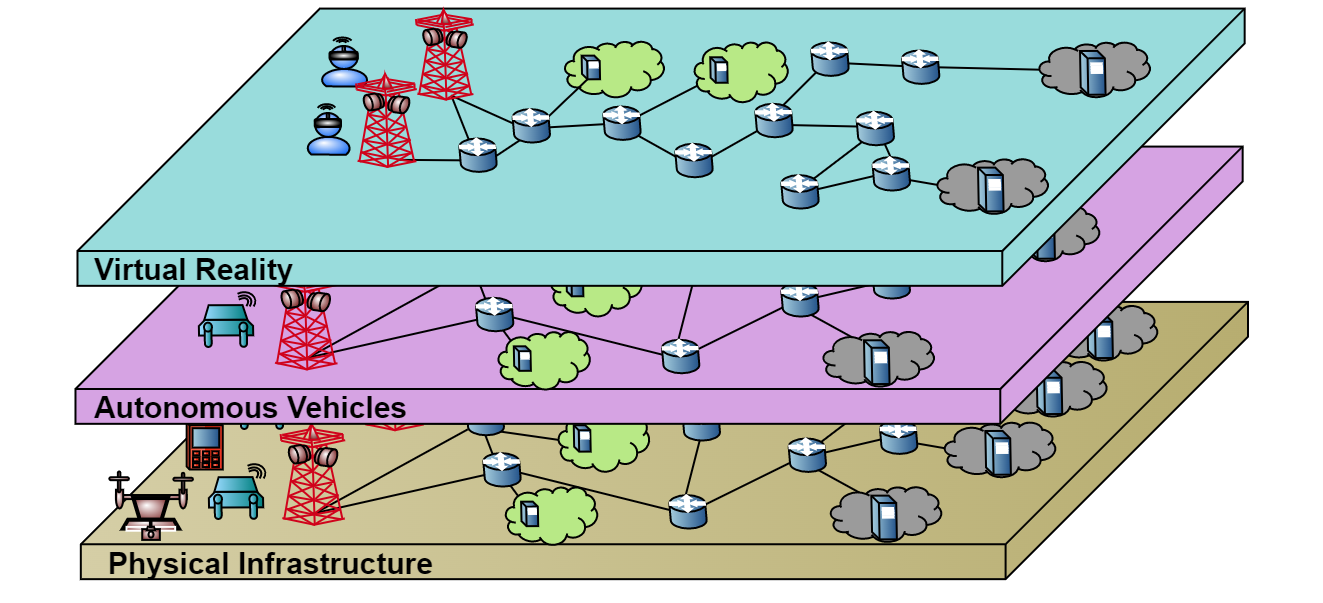}
\caption{ \cite{9927325} The SPs support the different applications through dedicated slices sharing the same physical infrastructure. }
\label{fig-system}
\end{figure*}
We consider a system with a set of InPs, who own the physical resources such as CPU, memory, radio resource, etc., at a geographically distributed set of locations or cells (macro, micro, small)\footnote{ We use term `cell' and `location' in this work interchangeably.} as represented in Fig.~\ref{fig-system}. Let  set of cells be denoted by $\mathcal{C}=\lbrace1,\ldots, C\rbrace$ and $\mathcal{R}^{c}=\lbrace1,\ldots, R\rbrace$ represents set of resources available at each location $c$. A set of SPs denoted by $\mathcal{S}=\lbrace1,\ldots, S\rbrace$ lease available resources from InPs through network slicing to create one or more slices to provide heterogeneous vertical applications (services) (e.g., IoT, VR, online gaming, autonomous driving and healthcare etc.) for the subscribed users at multiple locations. Here network slicing is a process in which the physical network is virtualized and then logically separated to match the SP requirements. For instance, a SP can lease some resources and create two slices, one for VR applications and another for autonomous driving applications. We consider that the resource requirement of SPs in each cell depends on the number of users in that cell and the type of service they provide.
\subsection{User utility model}
As shown in Fig.~\ref{fig-system1}, a set of users $\mathcal{V}$ are categorized into a set of classes or applications denoted by $\mathcal{K}$, where each class represents a different service requirement. Users obtain the resources by subscribing to the services offered by various SPs. The SP needs to provide heterogeneous resources to meet the service rate defined per class. We consider that every user needs a minimum allotment of each resource to meet a certain service rate. Let $D_{k} =\left ( d_{k1}\dots d_{kR} \right )$ be the base demand vector, where the element $d_{kr}$ is the amount of resource type $r$ required by a user of class $k$ to achieve a unit service rate. The service rate obtained by the user that belongs to class $k$ for acquiring a resource bundle $\left( x_{k1}\dots x_{kR}\right)$ is defined by the following Leontief function \cite{leontif},
\begin{equation}
u_{\nu}=\underset{r}{min}\left \{ \frac{x_{kr}}{ d_{kr}} \right \}\label{leontif}
\end{equation}
where $x_{kr}$ is the amount of resource type $r$ allocated to the user of class $k$. For instance, consider that a user from class $k$ with its base demand vector defined as  $d_{k,BW}=0.2$ and $d_{k,CPU}=0.1$ units, receives $0.4$ units of bandwidth and $0.2$ units of CPU then the service rate achieved by user $v$ is given by , $u_{v}=min\left \{ \frac{0.4}{0.2} ,\frac{0.2}{0.1}\right \}$. Observe that increment in the bandwidth to $0.6$ units does not increase the utility which highlights the main attribute of the Leontief function, i.e., improvement in the utility is possible only with proportional increment in all the allocated resources.
\subsection{Service Provider Utility Model}
Let us consider that each SP supports various classes or applications of users at a given location $c$ and this support may vary based on the location. Let a set of user classes supported by a SP $s$ denoted by $\mathcal{K}^s$, and for each service type, SPs operate through a separate slice. Let $n^{s}_{kc}$ be the number of users from class $k$ associated to SP $s$ present in cell $c$, and $\mathcal{K}^s_{c}$ be the set of classes supported by SP $s$ at location $c$. We assume that each SP treats the users in the same class uniformly, i.e., the service rate provided by the SP $s$ is equally divided among the users in the same class at location $c$, in general, this is valid for all locations. Thus a class-level allocation can easily be treated as to user-level allocation. The utility that accounts for  the benefit of resource allocation obtained by the SP is defined as follows
\begin{equation}
U_s=\sum_{c\in\mathcal{C}}\sum_{k\in\mathcal{K}^s_c}\frac{n^s_{ck}}{(1-\alpha_s)}\left(\frac{u^s_{ck}}{n^s_{ck}}\right)^{(1-\alpha_s)}
\end{equation}
In the above equation $u^s_{ck}=\min_{r\in\mathcal{R}_{c}}\left \{  \frac{x^s_{ckr}}{d^s_{ckr}}\right \}$, where $u^s_{ck}$ and $x^s_{ckr}$  represents the total service rate and the amount of resource type $r$ allocated to set of users belonging to class $k$ in cell $c$ respectively and $d^s_{ckr}=d_{kr}$. For the sake of convenience, we replace $\left( n^s_{ck} \right)^{\alpha_s}$ with $w^s_{ck}$ and write the above utility as 
\begin{equation}
U_s=\sum_{c\in\mathcal{C}}\sum_{k\in\mathcal{K}^s_c}\frac{w^s_{ck}}{(1-\alpha_s)}\left(u^s_{ck}\right)^{(1-\alpha_s)} \label{sputility}
\end{equation}
The utility \eqref{sputility} demonstrates that the SPs aim to attain the well-known $\alpha$-fairness criteria \cite{alpha_fair} among the classes of users while delivering the service at different cells. The values of $\alpha\in  \left [  0 , 1\right )\cup \left (1 ,\infty \right ]  $ interpolate between individual fairness among the users and efficiency of the system. The $\alpha=0$ corresponds to the utilitarian (average) objective where the goal is to optimize system efficiency, while $\alpha\rightarrow \infty$ corresponds to max-min fairness (the egalitarian objective). The $\alpha = 1$ and $\alpha = 2$ corresponds to the widely established proportional fairness and potential criterion, respectively.
\begin{equation}
U_s=\begin{cases}
\sum_{c\in\mathcal{C}}\sum_{k\in\mathcal{K}^s_c}w^s_{ck}\left(u^s_{ck}\right) & \text{ if } \alpha_s=0\\
\prod_{ck} \left(u^s_{ck} \right )^{w^s_{ck}} & \text{ if } \alpha_s= 1,\\
 \min_{c,k} \left \{ \frac{u^s_{ck}}{n^s_{ck}} \right \}  &\text{ if } \alpha_s= \infty\\ 
\sum_{c\in\mathcal{C}}\sum_{k\in\mathcal{K}^s_c}\frac{w^s_{ck}}{(1-\alpha_s)}\left(u^s_{ck}\right)^{(1-\alpha_s)} & \text{ if } \alpha_s= \text{ others }
\end{cases}
\end{equation}
Here, alpha fairness offers service providers a flexible approach to resource allocation. By adjusting the alpha parameter, SPs can strike a suitable balance between optimizing system performance and promoting fairness. When $\alpha_{s} = 0$, it prioritizes system efficiency, maximizing the overall system throughput or weighted sum of utilities. However, this can lead to highly unequal distributions of resources among users. For instance, when the weight $(w^s_{ck})$ of a particular class is higher than the other, it can result in that class receiving significantly more or even all resources than others, leading to potential dissatisfaction for remaining classes. When  $\alpha_{s} = 1$, it focuses on fairness and ensures a proportional distribution of resources among all classes according to their weights,  where none of the service classes can receive a zero service rate.

\subsection{Service Provider Budgets}
We further consider that each  SP $s \in \mathcal{S}$ is allocated with a finite budget or share $B_{s}$ that represents its share of total infrastructure such that $\sum_{s\in\mathcal{S}}B_{s}=1$, the budget allocated for each SP depends on its service level agreement with the InP. For example, InPs can allocate different budgets (virtual budgets) to the SPs depending on their priorities, initial investment in the infrastructure and/or potential revenue generation.
\begin{remark}
In the above model, we have considered that SPs treat users uniformly at each cell while delivering service to users from the same class. However, our framework is more general and applicable to cases where SPs may handle each user independently and provide the service according to their preferences and priorities; in such cases, each user will be treated as a separate class.       
\end{remark}
\begin{remark}
We use the Cobb-Douglas function instead of the logarithmic function, generally used in $\alpha$ fairness criteria; however, both the operations perform the identical task of attaining proportional fairness among the users.
\end{remark}

\subsection{Service Providers' Resource Allocation Problem}
Each SP needs to pay for the resources he leases from the InP. Let us consider $p_{c}=\left(p_{cr},\dots,p_{cR}\right)$ as the vector of prices charged at cell $c$ where the $p_{cr}$ is the price per unit of resource type $r$ at cell $c$.
Given the prices charged by the InP for resources, we anticipate the SPs to act as rational agents and spend their budgets to procure the resources in a manner that maximizes their utility. The decision problem for each SP $s$ to find an optimal bundle of resources to be requested to the InP is defined by the following optimization problem.

\begin{subequations}
\begin{align}
P_{s}: \hspace{10pt}& \underset{{\bf x,u}}{\text{Maximize :}} 
& & \sum_{c\in\mathcal{C}}\sum_{k\in\mathcal{K}^s_c}\frac{1}{1-\alpha_s}w^s_{ck}\left(u^s_{ck}\right)^{1-\alpha_s}\label{maxutility}\\
& \text{subject to}   & &u^s_{ck}\leq \frac{x^s_{ckr}}{d^s_{ckr}} \quad \forall c\in {\cal C}_s,k\in {\cal K}^s_c,r\in {\cal R}_{c}, \label{linearsputiltiy}\\
& & &  \sum_{ c \in {\cal C}}\sum_{k\in\mathcal{K}^s_c}\sum_{r\in\mathcal{R}_{c}} p_{cr} x^s_{ckr}    \leq B_{s} \label{budgetconst}.
\end{align}\label{marketprob}
\end{subequations} \label{SPsopti}

Here, the objective is to find an optimal allocation vector $[x^s_{ckr}]$ that solves the problem ~\eqref{maxutility}-\eqref{budgetconst}. Such optimal share of the resources are allocated to the SP through network slicing process, i.e., each SP is assigned with a slice that contains the resources that were divided optimally among the competing SPs. As mentioned before, the use of $\alpha_s$ ensures the fairness criteria among the users of different classes that are associated to the SP $s$. Constraints defined in eq.~\eqref{linearsputiltiy} ensure that the resulting service rate $u^s_{ck}$ does not consider the excess resources but only the resources that are in proportion. Whereas constraints in \eqref{budgetconst} guarantee that the resulting allocation adheres to the budget limitations of the SP.

\section{Resource Pricing and  Equilibrium Problem}\label{Problem_formulation}

We  consider that the capacity of the resource released by the InP in each location is finite. Given the per-unit prices for resources set by the InP, the total resource requested by the SPs through the optimization program \eqref{marketprob} may violate the infrastructure capacity. 
In this work, our primary goal is to design a resource pricing and allocation mechanism for network slicing, which maximizes the network resource utilization and assigns SPs their favourite bundle while adhering to the capacity limit of each resource type as well the budget constraints of the SP. In market economics, this pricing problem is addressed as a market clearing problem, where the market prices are determined in such a way that the amount of resources requested by the buyers matches the amount of resources supplied by the sellers. Thus to address the issue, we adopt a market clearing approach inspired by market economics and formulate the resource allocation problem as a Fisher market where $S$ SPs act as consumers who spend their budget to purchase the resources available at the different cells. At the same time, InPs operate as producers selling capacity-constrained resources in exchange for consumer spending. Now we define the market formally as
\begin{definition}
$\mathcal{M}:=\left \langle \mathcal{S},\left({B}_{s}\right)_{s\in\mathcal{S}},\bigcup_{c\in\mathcal{C}}\mathcal{R}_{c},\left(U_{s}\right)_{s\in\mathcal{S}},p\right \rangle $ as follows:
\begin{itemize}
\item[•]Player set: the set of SPs  $\mathcal{S}$
\item[•]Budgets :$B_{s}$
\item[•]Resources set: $\bigcup_{c\in\mathcal{C}}\mathcal{R}_{c}$
\item[•]Utility: The utility of each SP $s$ is equal to the $U_{s}$
\item[•]Price vector: $p$
\end{itemize}
\end{definition}
\par The primary goal of this work is to provide a Fisher market-based resource allocation scheme that effectively prices and allocate limited physical resources to SPs with heterogeneous requirements and preferences. Since in the Fisher market, resources are allocated to SPs depending on their rational decisions and interactions, we investigate the outcome of the proposed FM model in light of game theory and recall the definition of market equilibrium. 
\begin{definition}
 A market equilibrium (ME) is defined as the prices and resultant $(\hat{\mathbf{p}},\hat{\mathbf{x}})$ allocation, where the market clears its resources and SPs get their favourite resource bundle. Mathematically $(\hat{\mathbf{p}},\hat{\mathbf{x}})$ is ME if following two conditions are satisfied.
\begin{enumerate}
\item Given the resource price vector, every SP spends its budget such that it receives resource bundle $\hat{\mathbf{x}}$ that maximizes its utility. 
\begin{equation}
\begin{aligned}
\forall s ,\hat{x}_s \in \argmaxA_{\sum_{c}\sum_{k}\sum_{r}{x^s_{ckr} \hat{p}_{cr}}\leq B_{s}}U_s(x)
\end{aligned}
\end{equation}
\item Every resource is either
fully allocated or has zero price, i.e., we have:
$(\sum_{s}\sum_{k}\hat{x}^s_{ckr}-C_{cr})\hat{p}_{cr}=0$
\end{enumerate}
 
\end{definition}
In the above definition, the first condition is meant that the equilibrium allocation maximizes satisfaction or the return of market investment of each SP; no equilibrium could be otherwise established. 
The second condition represents Walras's law \cite{walras1896elements}, which means that either the total demand of each resource meets the capacity and will be positively priced; otherwise, that resource is allocated free of cost. Undoubtedly, zero priced resources can be allocated arbitrarily to SPs. However, an additional allocation of these resources will not increase their satisfaction level. 
\begin{assumption}
In this work, we consider that for each SP, its user load  is dynamic and spatially inhomogeneous. The shares (budgets) are allocated to SPs over a long timescale (months/days), and the distribution of shares by the SPs over the resources is performed at a fast time scale (minutes/seconds), such that user load is regarded as stationary while performing the allocation.
\end{assumption}
In next section we will delve into the algorithmic procedures and mathematical techniques utilized to compute the general equilibrium of a Fisher market. We will investigate solutions that predominantly utilize either a centralized approach or a decentralized approach.

\section{Centralized approach}\label{CentralizedResAlloc}
In the centralized approach, the Infrastructure Provider holds information about the utility functions and base demands of service providers. Leveraging this knowledge, the InP aims to optimize resource allocation by computing appropriate prices and quantities of resources for each SP. The primary objective is to ensure that, collectively, these allocations satisfy the equilibrium (ME) conditions.
In order to solve the market equilibrium problem, we first introduce the concept of the Esenberg-Gale (EG) optimization problem, which can then be used to find the equilibrium of the Fisher market $\mathcal{M}$ under some conditions. Suppose all utilities of consumers in the FM are concave and homogeneous of degree one. In that case, according to \cite{roughga}\cite{eisenberg1961aggregation}, the equilibrium of that market can be obtained by solving the below Esenberg-Gale optimization program.
 \begin{subequations}\label{galeconvex}
\begin{align}
& \underset{{\bf x,u}}{\text{Maximize :}} 
& & \sum_{s\in {\cal S}}B_s \log(U_s) \label{primal}\\
& \text{subject to} & &  U_s=\left(\sum_{c\in\mathcal{C}}\sum_{k\in\mathcal{K}^s_c}w^s_{ck}\left(u^s_{ck}\right)^{1-\alpha_s}\right)^{\frac{1}{1-\alpha_s}} \;\forall \; s\in {\cal S}, \label{const1}\\
&  & &u^s_{ck}\leq \frac{x^s_{ckr}}{d^s_{ckr}} \quad \forall \ s\in {\cal S},c\in {\cal C}_s,k\in {\cal K}_c,r\in {\cal R} \label{const2},\\
& & &  \sum_{ s \in {\cal S}}\sum_{k\in\mathcal{K}^s_c}  x^s_{ckr}    \leq C_{cr}, \quad \forall \ c\in {\cal C},r\in {\cal R} \label{const3}.
\end{align}
 \end{subequations}
Suppose $\mathbf{x}^{*}$ and $\mathbf{p}^{*}$ be the primal and dual optimal solutions of EG program, where $\mathbf{p}^{*}$ be the dual variable or lagrangian multiplier associated with capacity constraints then $\mathbf{x}^{*}$ and $\mathbf{p}^{*}$ represent the market equilibrium allocation and prices of market $\mathcal{M}$.    
\begin{theorem}
Optimal allocation $\mathbf{x}^{*}$ and the corresponding prices $\mathbf{p}^*$ (dual variable associated to \eqref{const3} of optimization program represents the market equilibrium
\end{theorem}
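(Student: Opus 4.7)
The plan is to apply the Karush--Kuhn--Tucker conditions to the convex program \eqref{galeconvex} and identify them, term by term, with the two defining conditions of the market equilibrium. The program is convex: after eliminating $U_s$ by substituting \eqref{const1} into \eqref{primal}, the objective becomes the concave function $\sum_s \tfrac{B_s}{1-\alpha_s}\log\!\big(\sum_{c,k} w^s_{ck}(u^s_{ck})^{1-\alpha_s}\big)$ (interpreted as the Cobb--Douglas $\sum_{s,c,k} B_s w^s_{ck}\log u^s_{ck}$ when $\alpha_s=1$), while \eqref{const2}--\eqref{const3} are linear. Slater's condition is met whenever the capacities $C_{cr}$ are positive, so strong duality holds and KKT is necessary and sufficient.

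First I would attach multipliers $p_{cr}\ge 0$ to the capacity constraints \eqref{const3} and $\lambda^s_{ckr}\ge 0$ to the Leontief inequalities \eqref{const2}, then write the stationarity conditions with respect to $u^s_{ck}$ and $x^s_{ckr}$. Eliminating the $\lambda^s_{ckr}$ via the $u$-stationarity, and using that at optimality $u^{*s}_{ck}=x^{*s}_{ckr}/d^s_{ckr}$ on the tight coordinates of the Leontief min, yields the aggregated relation $B_s\,\partial_{x^s_{ckr}}\!\log U_s(\mathbf{x}^{*s}) = p^*_{cr}$ whenever $x^{*s}_{ckr}>0$, and $\le p^*_{cr}$ otherwise. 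These are exactly the KKT conditions of the per-SP problem \eqref{marketprob} with $\mathbf{p}^*$ playing the role of the posted prices, provided the budget is spent in full.

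To establish budget exhaustion, I would invoke the fact that $U_s$ is homogeneous of degree one in $\mathbf{x}^s$: the outer CES aggregator has overall degree $(1-\alpha_s)\cdot\tfrac{1}{1-\alpha_s}=1$ and each Leontief $u^s_{ck}$ is degree one in $\mathbf{x}^s$. Euler's identity then gives $\sum_{c,k,r} x^{*s}_{ckr}\,\partial_{x^s_{ckr}}U_s = U_s(\mathbf{x}^{*s})$, which combined with the stationarity relation produces $\sum_{c,k,r} p^*_{cr}\,x^{*s}_{ckr}=B_s$. Concavity of $U_s$ together with the KKT conditions then certifies that $\mathbf{x}^{*s}\in\argmaxA\{U_s(\mathbf{x}^s)\mid \sum_{c,k,r} p^*_{cr}x^s_{ckr}\le B_s\}$, which is condition~1 of the market equilibrium.

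The second ME condition, Walras's law $(\sum_{s,k} x^{*s}_{ckr}-C_{cr})\,p^*_{cr}=0$, is immediate from primal--dual complementary slackness on \eqref{const3}. The main technical obstacle I anticipate is the bookkeeping around the Leontief constraint \eqref{const2}: since several resources may be tight simultaneously at the inner min, the individual multipliers $\lambda^s_{ckr}$ are not uniquely determined, and one must verify that only the aggregate combination entering the per-SP KKT (namely the summed price of a proportional bundle along the Leontief ray) is pinned down by the data. A parallel check is required for the boundary case $\alpha_s=1$, handled through the Cobb--Douglas form, but the Euler-identity step and the complementary-slackness conclusion apply verbatim.
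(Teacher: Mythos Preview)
Your proposal is correct and follows essentially the same route as the paper: both arguments write the KKT conditions for the EG program \eqref{galeconvex}, read off Walras's law from complementary slackness on \eqref{const3}, establish that each SP exhausts its budget at $(\mathbf{x}^*,\mathbf{p}^*)$, and then verify that the EG stationarity conditions coincide with the per-SP optimality conditions of \eqref{marketprob}. The only cosmetic difference is that the paper derives budget exhaustion by explicit manipulation of the stationarity condition for the CES-of-Leontief utility (multiplying by $u^s_{ck}$ and summing), whereas you invoke Euler's identity for degree-one homogeneous $U_s$; these are the same computation, yours stated at a slightly higher level of abstraction.
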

\begin{proof}
In the proposed FM model utility of each SP is concave and homogeneous of degree one; hence, the result follows from \cite{eisenberg1961aggregation}\cite{roughga}. For completeness, we present an elementary, self-contained step wise proof
\begin{enumerate}
\item First, we show that the optimal solution $(\mathbf{x}^{*})$ to program and its dual solution $(\mathbf{p}^{*})$ pair is budgets balancing
\emph{i.e,} at the solution pair the budget of each SP is fully utilize $$\forall s\in\mathcal{S}\quad\sum_{ c \in {\cal C}}\sum_{k\in\mathcal{K}^s_c}\sum_{r\in\mathcal{R}_{c}} p^*_{cr} {x^s_{ckr}}^*=B_{s}$$
the detailed proof is provided in the Appendix \ref{proofmarketequi}
\item Given price vector  $(\mathbf{p}^*)$ we find best response (demand) by each SP $s$ by solving \eqref{marketprob}, (the derivation of best response is provided in Appendices \ref{bestresponse}) and it gives the same solution allocation $\mathbf{x}^{*}$
or consider any possible allocation $\mathbf{y}$ that can be requested by SP such that 
 $$\forall s\in\mathcal{S}\quad\sum_{ c \in {\cal C}}\sum_{k\in\mathcal{K}^s_c}\sum_{r\in\mathcal{R}_{c}} p^*_{cr} y^s_{ckr}\leq B_{s}$$
 \item Now, we can show that $\forall s\in\mathcal{S},\quad U_{s}(y_s)\leq U_{s}(x^*_s)$
\end{enumerate}
which proves that $(\mathbf{x}^*,\mathbf{p}^*)$ solution to program, is  market equilibrium.
\end{proof}
We have already discussed in the previous section, the market equilibrium solution to the proposed problem \eqref{marketprob} can be found as a solution to the equivalent convex optimization problem \eqref{galeconvex}, which can then utilize by InP to implement the proposed allocation scheme. However, such a centralized implementation requires all the SPs' private utility functions to be available to InP. This is rarely possible, as it is generally not acceptable for SPs to reveal their private data to third parties. In the next section section we provide with decentralized resource allocation approach to overcome this issue
\section{Decentralized approach}\label{ResAllocdecentra}
 In this section, we focus on developing the decentralized algorithm enabling the service provider to reach the market equilibrium of $\mathcal {M} $ without disclosing their private utilities. One of the possible solutions in this direction is to solve the convex optimization problem using a Walras' t$\hat{\text{a}}$tonnement-like algorithm where the resource price is raised if the demand for a resource exceeds the resource supply and decreases if the demand for the resource is less than the supply. However, this is not the way how the market generally functions in practice; this type of approach does not always guarantee the ability to satisfy the resource capacity while applying a process. \par To deal with this issue, we advocate an alternative approach by Shapley and Shubik, well known as the Trading post mechanism. The proposed method does not require the service providers to reveal their utilities; SPs can distribute their budgets over the resources and customize their allocations. In TP mechanism, SPs distribute their budget over their required resources in bids. Once all SPs place the bids, each resource type's price is determined by the total bids submitted for that resource. Let SP $s$ submits a bid $b^s_{cr}$ to resource $r$ at cell $c$. The price of resource type $r$ at cell $c$ is then set to $\sum_{s\in \mathcal{S}} b^s_{cr} $, accordingly SP $s$ receives a fraction of $x^s_{cr}$ in return to his spending of $b^s_{cr}$ 
\begin{equation}
x^s_{cr}=\begin{cases}
 \frac{b^s_{cr}}{\sum_{s'\in\mathcal{S}} b^{s'}_{cr}}& \text{ if } b^{s}_{cr}>0, \sum_{s'\in\mathcal{S}} b^{s'}_{cr}>0\\ 
 0& b^{s}_{cr}=0,\sum_{s'\in\mathcal{S}} b^{s'}_{cr}>0\\
\text{as per demand}& \sum_{s'\in\mathcal{S}} b^{s'}_{cr}=0
\end{cases}
\end{equation}
In our framework, the same resource is required by users from different classes; thus, the total budget spent by SPs  $s$ on resource  $r$  is the sum of budgets spent  by the SP on resources $r$ for the set of its users belonging to all possible classes  $b^s_{cr}=\sum_{k}b^s_{ckr}$ .
We assume that the SPs are price takers\footnote{Price taker vs a price maker,  price takers accept the market price as given, while price makers have some degree of control over the price due to their market power.}, and they request different amounts of the resources
by distributing their budgets over the resources in bids. The
InP announces the resources’ prices and allocates the resources according to TP-mechanism. If all the SPs are satisfied with the allocation and prices announced by InP, the mechanism has reached ME; otherwise, SPs might modify their distribution of budgets (demand) depending on the current prices. This brings new the challenge of dynamics or bids updating scheme: (how) do SPs reach a market equilibrium via the TP mechanism ? In the coming sections, we focus on developing the bid updating rule that enables the SPs to reach the ME of the proposed allocation scheme. However, before moving directly to the main results, we build some mathematical tools that will be required afterwards in developing the bid updating rule and prove its convergence results.
 \subsection{Potential function}\label{sec:potetialfunction}
In this section, we construct a potential function to the Eisenberg-Gale program \eqref{galeconvex} and show the optimal solution to the problem (\emph{i.e.} ME) is an optimal point of the candidate potential function. In this article, we restrict our analysis to the case when the $\alpha_{s}$ the alpha fairness criteria employed by each SP $s$ takes value in  $\left[1,\infty\right]$, for the remaining case when $\alpha_{s}\in \left[0, 1\right]$  requires complex saddle point analysis and we keep this for future work. Next, we show that when all the SP employ the $\alpha$ fairness criteria with $1\leq\alpha_s\leq\infty$, our designed potential function is convex, and its minimal point represents the ME. Now to start with the designing of the potential function, we consider the dual of an optimization problem where the goal is to minimize $\Upsilon (p)$
\begin{dmath}
\Upsilon (p)= \max_{\sum_{c,k,r}x^s_{ckr}p_{cr}=B_s}\left ( \sum_{s}B_s \log\left ( U_s \right )+\sum_{c\in\mathcal{C}}\sum_{r\in\mathcal{R}_c} p_{cr} \left (1-\sum_{s}\sum_{k\in\mathcal{K}^s_c}x^s_{crk}\right )\right ) 
\end{dmath}  
Now we introduce the potential function and write separately depending on the parameter value $\alpha_s$ used by the service providers. Let $\Phi(b_{\#})$ denote the potential function of the EG-problem when the alpha fairness parameter $\alpha$ with the condition $\#$  has been applied by the SPs. For example, $\Phi(b_{\geq})$ denotes the potential function when SPs apply the fairness with $\alpha\geq1$. We use the same notation for the remaining article to describe the potential function and its connection with the SPs' $\alpha$ fairness parameter.   


\begin{dgroup}
\begin{dmath}
\Phi(b_{=1})=\sum_{s:\alpha_s=1}\sum_{c\in\mathcal{C}^s}\sum_{k\in\mathcal{K}^s_{c}}\sum_{r\in\mathcal{R}_{c}}w^s_{ck}b^s_{ckr}\log\left(\frac{b^s_{ckr}}{p_{cr}d^s_{ckr}}\right)
\end{dmath}
\begin{dmath}
\Phi(b_{>1\neq\{\infty\}})=\sum_{s:1<\alpha_s<\infty}\sum_{c\in\mathcal{C}^s}\sum_{k\in\mathcal{K}^s_{c}}\sum_{r\in\mathcal{R}_{c}}b^s_{ckr}\log\left(\frac{b^s_{ckr}}{p_{cr}d^s_{ckr}}\right)-\frac{1}{(1-\alpha_s)}\sum_{c\in\mathcal{C}^s}\sum_{k\in\mathcal{K}^s_{c}}b^s_{ck}\log\left(\frac{b^s_{ck}}{w^s_{ck}}\right)
\end{dmath}
\begin{dmath}
\Phi(b_{=\infty})=\sum_{s:\alpha_s=\infty}\sum_{c\in\mathcal{C}^s}\sum_{k\in\mathcal{K}^s_{c}}\sum_{r\in\mathcal{R}_{c}}b^s_{ckr}\log\left(\frac{b^s_{ckr}}{w^s_{ck}p_{cr}d^s_{ckr}}\right)
\end{dmath}
\end{dgroup}
As all cases provided above are disjoint, combining them all, we write the complete potential function as 
\begin{align}
\Phi(b)=\Phi(b_{=1})+\Phi(b_{>1\neq\{\infty\}})+\Phi(b_{=\infty})\label{potential}
\end{align}

In the following theorem, we establish the relationship between the potential function $\Phi$ and its dual program $\Upsilon$
\begin{theorem}
Let $\mathbf{b}$ be the spending of service providers, and $x(\mathbf{b})$ be the corresponding allocation according to the TP-mechanism, where $x^s_{ckr}=\frac{b^s_{ckr}}{p_{cr}}$ and $p_{cr}(\mathbf{b})=\sum_{s}\sum_{k}b^s_{ckr}$ then we have following result 
$$\Upsilon(p(b))-\Upsilon(p(b^{*}))\geq \Phi(b^{*})-\Phi(b)$$
where $b^*$ denotes the ME of the market $\mathcal{M}$ 
\end{theorem}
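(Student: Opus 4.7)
The plan is to reduce the inequality
$$\Upsilon(p(b))-\Upsilon(p(b^{*}))\geq \Phi(b^{*})-\Phi(b)$$
to the statement that the map $b\mapsto \Phi(b)+\Upsilon(p(b))$ attains its global minimum at the equilibrium bid profile $b^{*}$, and then to establish this minimum property via explicit evaluation plus convexity.

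First, I would compute $\Upsilon(p)$ in closed form. The inner maximization in the definition of $\Upsilon$ decouples across SPs: given prices $p$, each SP's optimal expenditure pattern is the demand function obtained from problem $(P_s)$, which is already worked out in the appendix on best response. Substituting those demands back into the Lagrangian yields a price-dependent expression whose structure splits along the three $\alpha$-regimes, $\alpha_s=1$, $1<\alpha_s<\infty$, and $\alpha_s=\infty$, mirroring exactly the three pieces used to build $\Phi$ in \eqref{potential}.

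Next, I would evaluate $\Upsilon$ along the TP trajectory by substituting $p_{cr}(b)=\sum_{s,k}b^s_{ckr}$ together with the TP allocation $x^s_{ckr}=b^s_{ckr}/p_{cr}$. When combined with $\Phi(b)$ term by term, the shared $\log(b^s_{ckr}/(p_{cr}d^s_{ckr}))$ contributions partially cancel, and what remains can be rearranged into a sum of convex pieces in $b$ (essentially negative-entropy and quadratic-in-bids structures). I would then verify that the KKT stationarity conditions for the EG program \eqref{galeconvex} force $b^{*}$ to be a stationary point of $\Phi(b)+\Upsilon(p(b))$; by convexity this stationary point is the global minimum, which after rearrangement gives exactly the claimed inequality.

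The main obstacle I expect is establishing the joint convexity of $\Phi(b)+\Upsilon(p(b))$ in $b$. Individually the summands $b^s_{ckr}\log(b^s_{ckr}/(p_{cr}(b)d^s_{ckr}))$ are not convex, because $p_{cr}$ itself depends linearly on $b$; convexity must emerge from cancellation between $\Phi$ and $\Upsilon$, and this cancellation must be verified uniformly across the three $\alpha_s$-regimes. A secondary technical issue is handling the boundary where some $b^s_{ckr}$ vanishes: the logarithms in both $\Phi$ and $\Upsilon$ become singular there, and $\alpha_s=\infty$ is itself only defined as a limit of the Leontief-style utility, so a careful continuity/approximation argument is needed to extend the interior inequality to the closure of the bid simplex.
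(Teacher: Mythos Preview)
Your reduction to minimizing $b\mapsto\Phi(b)+\Upsilon(p(b))$ is a correct reformulation of the stated inequality, and the plan can be made to work, but it is considerably more involved than the paper's argument and misses the structural shortcut that makes the result almost immediate. The paper never attempts joint convexity of $\Phi(b)+\Upsilon(p(b))$. Instead, after substituting the best-response spendings $b(p)$ into the Lagrangian, it recognizes that the closed form obtained for $\Upsilon(p)$ is precisely the $\Phi$-formula evaluated at $b=b(p)$ with the prices $p$ treated as \emph{fixed constants}. For frozen $p$, that formula is just a sum of terms of the type $b\log(b/\text{const})$ (plus analogous aggregate terms), hence convex in $b$ with minimizer $b(p)$; this yields $\Upsilon(p)\leq[\text{$\Phi$-formula at any feasible }b]$ directly. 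Specializing $p=p(b)$ gives $\Upsilon(p(b))\leq\Phi(b)$, and the equilibrium fixed-point identity $b^{*}=b(p(b^{*}))$ forces equality at $b^{*}$; the theorem then follows by subtraction. The obstacle you flag as the main difficulty, namely that convexity must emerge from cancellation between $\Phi$ and $\Upsilon$ because $p_{cr}(b)$ depends on $b$, simply does not arise in the paper's decoupled argument: the prices are held fixed when the convex minimum is taken, so every logarithmic term is elementary negative entropy. Your route would still go through (both $\Phi$ and $\Upsilon\circ p$ are in fact separately convex in $b$, the latter because $\Upsilon$ is a dual function composed with the linear map $b\mapsto p(b)$), but the paper's decoupling is shorter, avoids any boundary or limiting analysis for the $\alpha_s=\infty$ case, and delivers the sharper pointwise relation $\Upsilon(p(b))\leq\Phi(b)$ rather than only the summed inequality.
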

\begin{proof}
The detailed proof is provided in Appendix \ref{constructionpot}
\end{proof}
Moving ahead, we describe some properties of the function $\Phi$, which we will need afterwards to prove the convergence of the proposed bid updating scheme. First, we introduce the definition of the L-Bregman convex function and show that our designed potential function $\Phi$ admits this property.

\begin{definition}[\cite{FisherDynamics}]
 The function $f$ is L-Bregman convex  w.r.t Bregman divergence $d_{g}$ if, for any $y \in rint(C)$ and $x\in C$,
\begin{dmath}
f(y)+\left \langle\nabla f(y),x-y  \right \rangle\leq f(x)\leq f(y)+\left \langle\nabla f(y),x-y  \right \rangle +L.d_{g}(x,y)
\end{dmath}
\end{definition}
In the following lemma, we show that the potential  function $\Phi$ is 1-convex 
depending on the parameter $\alpha_{s}$  employed by the service providers for the fairness criteria.
\begin{lemma}\label{1convex}
The potential function $\Phi(b)$ is  1-Bregman convex w.r.t Bregman divergence $d_{g}$
\begin{align}
d_{g}=\sum_{s:1\leq \alpha_s\leq \infty}KL_a\left(.||.\right)-\sum_{s:1<\alpha_s <\infty}\frac{1}{(1-\alpha_s)}KL_b\left(.||.\right)\label{dg}
\end{align}

where 
\begin{dgroup}
\begin{dmath}
KL_a\left(x||y\right)=\sum_{c}\sum_{k}\sum_{r}x^s_{ckr}\log\left(\frac{x^s_{ckr}}{y^{s}_{ckr}}\right)
\end{dmath}
\begin{dmath}
KL_b\left(x||y\right)=\sum_{c}\sum_{k}x^s_{ck}\log\left(\frac{x^s_{ck}}{y^{s}_{ck}}\right)
\end{dmath}
\label{kldivergence}
\end{dgroup}\label{lema}
where $KL_a$ and $KL_b$ are the Kullback–Leibler divergences
\end{lemma}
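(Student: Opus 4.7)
The plan is to use the standard equivalent characterization of $L$-Bregman convexity: a smooth function $\Phi$ is $1$-Bregman convex with respect to $d_{g}$ if and only if both $\Phi$ and $g-\Phi$ are convex on the working domain, where $g$ is the convex generator of $d_{g}$. From the structure of $d_{g}$ in \eqref{dg} as a signed combination of $KL_{a}$ and $KL_{b}$, and using the fact that the negative entropy $h(x)=\sum_{i} x_{i}\log x_{i}$ generates KL divergence modulo affine terms (which vanish under the per-SP budget constraint), a natural choice is
\begin{equation*}
g(b)=\sum_{s:\,1\leq\alpha_{s}\leq\infty}\sum_{c,k,r} b^{s}_{ckr}\log b^{s}_{ckr}+\sum_{s:\,1<\alpha_{s}<\infty}\frac{1}{\alpha_{s}-1}\sum_{c,k} b^{s}_{ck}\log b^{s}_{ck},
\end{equation*}
which is strictly convex on the positive orthant as a nonnegative combination of negative-entropy terms.

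I would then verify the two convexity claims by splitting across the three $\alpha_{s}$ strata. For convexity of $\Phi$ itself, each term $b^{s}_{ckr}\log(b^{s}_{ckr}/p_{cr})$ with $p_{cr}=\sum_{s,k} b^{s}_{ckr}$ is an unnormalized relative-entropy functional, jointly convex in $b$ by the log-sum inequality (using that $p_{cr}$ is linear in $b$). For the $\alpha_{s}=1$ piece, where the weight $w^{s}_{ck}$ appears outside the log, I would compute the Hessian of $\sum_{i} w_{i} b_{i}\log(b_{i}/P)$ directly, complete the square, and invoke Cauchy--Schwarz in the form $(\sum_{i} z_{i} w_{i})^{2}\leq (\sum_{i} z_{i}^{2} w_{i}/b_{i})(\sum_{i} w_{i} b_{i})$ to conclude positive semidefiniteness. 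For convexity of $g-\Phi$ on the strata $1<\alpha_{s}<\infty$ and $\alpha_{s}=\infty$, the negative-entropy contributions in $g$ cancel those of $\Phi$ exactly, leaving only $\sum_{c,k,r} b^{s}_{ckr}\log p_{cr}$, which aggregates over $(s,k)$ to $\sum_{c,r} p_{cr}\log p_{cr}$, convex because $y\log y$ is convex and $p_{cr}$ is linear in $b$.

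The main obstacle is convexity of $g-\Phi$ on the $\alpha_{s}=1$ stratum, where the residual takes the form $\sum_{c,k,r}(1-w^{s}_{ck}) b^{s}_{ckr}\log b^{s}_{ckr}+\sum_{c,k,r} w^{s}_{ck} b^{s}_{ckr}\log p_{cr}$. Since the weight $w^{s}_{ck}=n^{s}_{ck}$ typically exceeds one, the first sum is concave in isolation, and the positive curvature must come from the $\log p_{cr}$ coupling. I expect to close this by restricting the analysis to the tangent space of the per-SP budget simplex $\sum_{c,k,r} b^{s}_{ckr}=B_{s}$ (where $d_{g}$ effectively lives once the affine corrections cancel), and re-using the Cauchy--Schwarz / complete-the-square manipulation developed for $\Phi$. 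If the direct route proves too delicate, I would fall back on the strategy of Cheung et al.~\cite{FisherDynamics} and prove the upper Bregman-smoothness inequality $\Phi(x)\leq \Phi(y)+\langle \nabla\Phi(y),x-y\rangle+d_{g}(x,y)$ by integrating the second-order Taylor remainder of $\Phi$ along the segment from $y$ to $x$ and bounding the resulting quadratic form termwise by the integrated divergence, thereby avoiding the need to verify convexity of $g-\Phi$ explicitly.
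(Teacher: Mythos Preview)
Your characterization---$\Phi$ is $1$-Bregman convex w.r.t.\ $d_{g}$ iff both $\Phi$ and $g-\Phi$ are convex---is correct, and the strata-by-strata verification you sketch would eventually go through, but it is considerably heavier than the paper's argument and leaves you stalled at the $\alpha_{s}=1$ case. The paper bypasses all of this with a single direct computation: it expands the first-order remainder $\Phi(b)-\Phi(b')-\langle\nabla\Phi(b'),\,b-b'\rangle$ explicitly and obtains the \emph{exact identity}
\[
\Phi(b)-\Phi(b')-\langle\nabla\Phi(b'),\,b-b'\rangle \;=\; d_{g}(b,b')\;-\;KL\bigl(p(b)\,||\,p(b')\bigr),
\]
where $p_{cr}(b)=\sum_{s,k}b^{s}_{ckr}$. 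Nonnegativity of the price-level KL divergence then delivers the upper Bregman bound in one stroke; the lower bound is just convexity of $\Phi$, which the paper asserts without further argument (your log-sum/Cauchy--Schwarz sketch is in fact more thorough on that point). The identity falls out because every summand of $\Phi$ is a negative-entropy functional precomposed with a linear aggregation map (the identity on $b^{s}_{ckr}$, the class sum $b^{s}_{ck}=\sum_{r}b^{s}_{ckr}$, or the price sum $p_{cr}$), and the Bregman remainder of $h\circ A$ with $h(z)=\sum_{i} z_{i}\log z_{i}$ and $A$ linear is exactly $KL(Ab\,||\,Ab')$. This renders both your Hessian manipulations and the $\alpha_{s}=1$ obstacle moot: under the identity, convexity of $g-\Phi$ is nothing more than $KL(p\,||\,p')\geq 0$. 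Your fallback of integrating the second-order Taylor remainder along the segment would also work but is still more than needed---the first-order remainder is already available in closed form, so no integration is required.
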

\begin{proof}
    Appendix \ref{proofoflema}
\end{proof}

\subsection{Bid updating rule}\label{bidupdating}
In this section, we provide with bid updating rule which enables the service providers to reach the ME of the market $\mathcal{M}$. We build an analysis of the bid updating rule depending on the SPs' $\alpha_s$ fairness criteria. We consider a case when all the SPs employ the $\alpha_s \geq 1$, calling it a complementary domain; over this domain, as we have shown in the previous section, the potential function $\Phi$ is convex in its argument, and the minimal point of $\Phi$ represents the ME points. As the function, $\Phi$ is separable in each SP decision. The SPs can reach the equilibrium by employing a mirror descent update to minimize the potential function $\Phi$  in a decentralized fashion.  Let $b^s_{ckr}(t)$ represent the bid submitted by SP $s$ at step $t$ on the resource type $r$ in the cell $c$ for the class of user $k$, $p_{cr}(t)$ defines the price of the resource set through TP-mechanism in time step $t$, where $p_{cr}(t)=\sum_{s}\sum_{k\mathcal{K}^s_c}b^s_{ckr}(t)$\\
 The bid update for service providers in the time step $t+1$ is given as
\begin{itemize}

\item[•]  if $\alpha_s=\infty$
\begin{align}
b^s_{ckr}(t+1)=
\frac{B_sw^s_{ck}p_{cr}(t)d^s_{ckr}}{\sum_{c}\sum_{k}\sum_{r} w^s_{ck}p_{cr}(t)d^s_{ckr}}\;\label{updateinft}
\end{align}
\item[•]  if  $1\leq \alpha_s <\infty$ 

\begin{align}
\frac{B_{s}\frac{p_{cr}(t)d^s_{ckr}}{\sum_{r}p_{cr}(t)d^s_{ckr}}\left( w^s_{ck}\right)^{\frac{1}{^{\alpha_s}}}\left(\sum_{r}p_{cr}(t)d^s_{ckr}\right)^{\frac{(1-\alpha_s)}{^{-\alpha_s}}}}{\sum_{c}\sum_{k}\left( w^s_{ck}\right)^{\frac{1}{^{\alpha_s}}}\left(\sum_{r}p_{cr}(t)d^s_{ckr}\right)^{\frac{(1-\alpha_s)}{^{-\alpha_s}}}} \label{updategeq1} 
\end{align}
\end{itemize}

Following theoretical results show that if the SPs update their bids according to the above-designed rule, then iterative bid updating dynamics of SPs converge to the market  equilibrium of market $\mathcal{M}$

\begin{theorem}
Consider each SP $s\in\mathcal{S}$ implement the $\alpha_{s}$-fairness with its respective fairness parameter $\alpha_{s}\in\left[1,\infty \right]$ and repeatedly update their distribution of shares using rule \eqref{updateinft}-\eqref{updategeq1}. Then the potential function $\Phi$ from \eqref{potential} converges to the ME as follows
\begin{dmath}
\Phi(b^{T})-\Phi(b^{*})\leq\frac{1}{T}\sum_{s}KL_a\left(b^*_{s}||b^0_s\right)-\frac{1}{T}\sum_{s:1<\alpha_s<\infty}\frac{1}{(1-\alpha_s)}KL_b\left(b^*_{s}||b^0_s\right)\label{thmresult}
\end{dmath}
where $KL_a$ and $KL_b$ are as defined in \eqref{kldivergence}
\end{theorem}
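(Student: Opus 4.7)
The plan is to interpret the prescribed bid update as a single mirror descent step (with step size one) on the potential $\Phi$ in the Bregman geometry defined by $d_g$, and then invoke the standard mirror descent convergence analysis made available by the $1$-Bregman convexity of $\Phi$ established in Lemma~\ref{1convex}.

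First I would verify that the closed-form updates \eqref{updateinft}--\eqref{updategeq1} solve, for each SP $s$ separately, the subproblem
\begin{equation*}
b^s(t+1) \;=\; \arg\min_{b^s\in \Delta_s}\Bigl\{\langle \nabla_{b^s}\Phi(b(t)),\, b^s - b^s(t)\rangle + d_g(b^s, b^s(t))\Bigr\},
\end{equation*}
with $\Delta_s = \{b^s\geq 0 : \sum_{c,k,r} b^s_{ckr} = B_s\}$. This is a KKT calculation: $\Phi$ and $d_g$ both decompose additively across SPs and split into $KL_a$-type log terms in the $b^s_{ckr}$'s and (for $1<\alpha_s<\infty$) $KL_b$-type log terms in the class aggregates $b^s_{ck}=\sum_r b^s_{ckr}$. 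Introducing a single Lagrange multiplier $\lambda_s$ for the budget, writing first-order conditions, and eliminating $\lambda_s$ via the budget constraint yields \eqref{updateinft} in the $\alpha_s=\infty$ case and \eqref{updategeq1} in the intermediate range.

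Second, I would chain three standard inequalities. The $1$-Bregman convexity inequality at $y=b(t)$, $x=b(t+1)$ gives
\begin{equation*}
\Phi(b(t+1)) \leq \Phi(b(t)) + \langle \nabla\Phi(b(t)), b(t+1)-b(t)\rangle + d_g(b(t+1), b(t)).
\end{equation*}
Plain convexity at $y=b(t)$, $x=b^*$ gives $\Phi(b^*) \geq \Phi(b(t)) + \langle \nabla\Phi(b(t)), b^*-b(t)\rangle$. Optimality of $b(t+1)$ in the mirror subproblem, combined with the three-point identity for Bregman divergences, yields
\begin{equation*}
\langle \nabla\Phi(b(t)), b(t+1)-b^*\rangle \leq d_g(b^*,b(t)) - d_g(b^*,b(t+1)) - d_g(b(t+1),b(t)).
\end{equation*}
Summing the three and cancelling the common $d_g(b(t+1),b(t))$ term gives the one-step progress $\Phi(b(t+1)) - \Phi(b^*) \leq d_g(b^*, b(t)) - d_g(b^*, b(t+1))$. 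Telescoping from $t=0$ to $T-1$, dropping the nonnegative $d_g(b^*,b(T))$, dividing by $T$, and applying Jensen's inequality to the Ces\`aro average $b^T := \tfrac{1}{T}\sum_{t=1}^{T} b(t)$ produces exactly \eqref{thmresult} once one recognizes the right-hand side as $\tfrac{1}{T}d_g(b^*,b(0))$ with $d_g$ written out using the two $KL$ pieces from \eqref{dg}.

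The hardest step will be the KKT verification of Step 1, not the mirror descent book-keeping of Steps 2--3. The potential $\Phi$ is split into three pieces indexed by whether $\alpha_s=1$, $1<\alpha_s<\infty$, or $\alpha_s=\infty$, and $d_g$ mixes an entropy-like term on the full bid vector with a negative multiple of an entropy-like term on the class aggregates. Matching these while handling the budget multiplier and the linking identity $b^s_{ck}=\sum_r b^s_{ckr}$ is delicate; in particular the intermediate $1<\alpha_s<\infty$ regime produces stationarity conditions that are implicit in $b^s_{ck}$ before collapsing to the softmax-with-$\alpha$-exponents written in \eqref{updategeq1}. A secondary subtlety is the meaning of $b^T$ in the statement: if it is intended to be the last iterate rather than the Ces\`aro average, one additionally needs a monotonicity argument along the trajectory, which however follows immediately from the same one-step progress inequality derived above.
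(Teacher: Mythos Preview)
Your proposal is correct and follows essentially the same route as the paper: identify the update rules \eqref{updateinft}--\eqref{updategeq1} as the mirror descent step on $\Phi$ with respect to $d_g$ (the paper does this via the KKT computation in Appendix~\ref{updaterulenegative}), invoke the $1$-Bregman convexity from Lemma~\ref{1convex}, and conclude. The only difference is that the paper delegates your Steps~2--3 to Theorem~3.2 of \cite{FisherDynamics} rather than spelling out the three-point/telescoping argument, and it treats $b^{T}$ as the last iterate (your monotonicity remark is exactly what closes that gap).
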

\begin{proof}
Steps of proof are as follows 
\begin{enumerate}
\item We show that if $\forall s\in\mathcal{S}$  with $\alpha_{s}\geq 1$ update rule \eqref{updateinft} and \eqref{updategeq1} is mirror descent update of potential $\Phi$ w.r.t. Bregman divergence $d_{g}$ \eqref{dg} 

the detailed derivation of update rule is provided in the Appendix \ref{updaterulenegative}, where mirror descent update in step $(t+1)$ is given as

\begin{dmath}
b^{s}(t+1)\longleftarrow\argminA_{\sum_{c}\sum_{k}\sum_{r}{b^s_{ckr}}\leq B_{s}}\left\lbrace \nabla_{b^s} \Phi\left(b^{s}(t)\right)\left(b^s-b^{s}(t)\right)+KL_a\left(b^s||b^{s}(t)\right)-\frac{KL_b\left(b^s||b^{s}(t)\right)}{(1-\alpha_s)})\right\rbrace
\end{dmath}

\item From Lemma \ref{1convex}, we know that $\Phi(b)$ is  $1$-Bergman convex function w.r.t. to $d_{g}$\eqref{dg}.  
\item Now, suppose the $b^{T}$ is the point reached after $T$ applications of the mirror descent update rule then by applying Thm 3.2 \cite{FisherDynamics}, 
we get the desired result \eqref{thmresult}  
\end{enumerate}
\end{proof}

\begin{corollary}
Consider SPs apply the fairness criteria with $\alpha_{s}\geq 1$ and the given price set by the TP-mechanism in each time step they update their distribution of shares in each next time step as the best response. Then the iterative best response dynamics of SPs converges to ME.  
\end{corollary}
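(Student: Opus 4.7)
The plan is to show that under the hypotheses of the corollary, the best-response dynamics is literally the same process as the mirror-descent update rule \eqref{updateinft}--\eqref{updategeq1} analyzed in the preceding theorem, so that convergence to the ME follows as an immediate consequence.

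First I would fix a bid profile $b(t)$ and the resulting TP prices $p_{cr}(t)=\sum_{s'}\sum_{k\in\mathcal{K}^{s'}_c} b^{s'}_{ckr}(t)$, and write the best response of SP $s$ as the solution of problem \eqref{marketprob} at these prices. The key observation is that each SP's best-response problem is a convex program in $x^s$ whose KKT conditions can be solved in closed form: the Leontief constraints \eqref{linearsputiltiy} are binding (so $x^s_{ckr}=d^s_{ckr}u^s_{ck}$), and substituting this into the budget constraint \eqref{budgetconst} turns the problem into a weighted $\alpha$-fair allocation over $(c,k)$ with per-class unit cost $\sum_r p_{cr}(t)d^s_{ckr}$. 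For $\alpha_s=\infty$ the max--min structure forces the ratios $u^s_{ck}/n^s_{ck}$ to be equalized across $(c,k)$, which after re-expressing the allocation as a bid $b^s_{ckr}=p_{cr}(t)x^s_{ckr}$ gives exactly \eqref{updateinft}. For $1\leq \alpha_s<\infty$, a direct Lagrangian calculation with multiplier on the budget constraint yields $u^s_{ck}\propto (w^s_{ck})^{1/\alpha_s}\bigl(\sum_r p_{cr}(t)d^s_{ckr}\bigr)^{-1/\alpha_s}$, and, after renormalizing the proportionality constant by the budget $B_s$ and converting to bids, recovers \eqref{updategeq1} verbatim.

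Once this identification is made, the best-response trajectory $\{b(t)\}_{t\geq 0}$ coincides with the mirror-descent trajectory studied in the previous theorem. Applying the theorem directly, we obtain
\begin{equation*}
\Phi(b^{T})-\Phi(b^{*})\leq \tfrac{1}{T}\sum_{s}KL_a(b^*_s\|b^0_s)-\tfrac{1}{T}\sum_{s:1<\alpha_s<\infty}\tfrac{1}{1-\alpha_s}KL_b(b^*_s\|b^0_s),
\end{equation*}
so $\Phi(b^T)\to\Phi(b^*)$. Since Lemma~\ref{1convex} together with the results of Section~\ref{sec:potetialfunction} shows that $\Phi$ is Bregman-convex with its minimizer on the budget simplex characterizing the ME, convergence of the potential implies convergence of $b(t)$ to the ME, which is the desired conclusion.

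The main (and essentially only) non-routine step is the explicit solution of the best-response problem for each $\alpha_s\in[1,\infty]$ and the verification that, after converting allocations to bids, the resulting expressions match \eqref{updateinft}--\eqref{updategeq1} exactly; once this bookkeeping is done, the corollary is a direct invocation of the preceding theorem. The Lagrangian calculation itself is already carried out, in the mirror-descent direction, in Appendix~\ref{updaterulenegative}, so the proof reduces to observing that the same first-order conditions characterize both procedures.
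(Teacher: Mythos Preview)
Your proposal is correct and follows essentially the same approach as the paper: identify the SPs' best response to the current TP prices with the mirror-descent update rule \eqref{updateinft}--\eqref{updategeq1} (the paper carries out exactly this Lagrangian computation in Appendix~\ref{bestresponse}), and then invoke the preceding theorem for convergence. Your write-up is somewhat more explicit about the Leontief-binding and bid-conversion bookkeeping, but the argument is the same.
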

\begin{proof}
We show that for all SPs with $\alpha_s \geq 1$, given resources prices announced by the TP-mechanism, the bid update rule in the next round is exactly the best response of SPs given resources prices set in the current round. Hence the convergence of best response dynamics follows from the previous theorem. The derivation of the best response of SP given prices is given in Appendix \ref{bestresponse} 
\end{proof}

\section{The fairness and efficiency }\label{fairnessefficiency}

This section investigates the fairness and efficiency properties of the proposed allocation scheme. We measure the performance of the proposed scheme with the help of the social welfare function; it is a real-valued function that measures the desirability of the allocation outcome. The higher a value it assigns to the outcome, the more desirable the outcomes for a social planner. Various social welfare functions have been mentioned in the literature, the most commonly studied  among them  are the max-min welfare $\Phi({x})=\underset{s}{\text{min}}\;U_s(x_s)$ the Nash welfare $\Phi({x})=\Pi_s U_{s}(x_s)^{B_s}$ utilitarian welfare $\Phi({x})= \sum_s U_{s}(x_s)$. As per the result established in the Section \ref{CentralizedResAlloc}, the market equilibrium for market $\mathcal{M}$ can be computed by solving EG-optimization program \eqref{galeconvex}, Eisenberg and Gale showed in their celebrated work  \cite{eisenberg1959consensus}\cite{eisenberg1961aggregation} that allocation under market equilibria achieves optimal Nash welfare. This result has been established based on a relation that the maximization of the objective function in \eqref{galeconvex} is equivalent to the maximization of Nash welfare function.
\begin{equation}
\argmaxA_{x\in\mathcal{X}}\Pi_s U_{s}(x_s)^{B_s}=\argmaxA_{x\in\mathcal{X}}\sum_{s\in\mathcal{S}}B_s\log\left(U_s(x_s)\right)
\end{equation}
Therefore proposed allocation scheme maximizes the Nash welfare or achieves the proportional fair criteria while distributing the resources among the service providers. 
\subsection{Baseline resource allocations}\label{baseline}
This section presents the two baseline allocation schemes to conduct a comparative analysis of the efficiency of our proposed resource allocation scheme. As discussed earlier, one of the goals of our proposed allocation is to achieve a tradeoff between efficiency versus SLA protection, and we know that the optimal social allocation provides better service utilization. In contrast, static proportional sharing allocation (SS) offers complete protection of SLA among SPs. Thus we consider the socially optimal
allocation and the static proportional sharing scheme as baseline allocation schemes. 
  \\
\textbf{Socially Optimal Allocation (SO):}
In this work, we consider that the utility of each SP is its private information and not known to others. However, If the SPs' utilities were known to the InP, the natural choice of allocation scheme InP could have applied is the socially optimal resource allocation scheme. Thus to compare the efficiency of the proposed allocation scheme, we consider the following social welfare optimization problem.
\begin{subequations}
\begin{align}
&\underset{x}{\text{maximize}}
&&\sum_{s\in\mathcal{S}}B_s \left(U_s(x_s)\right)\\
& \text{subject to}
&&\sum_{s\in\mathcal{S}}\sum_{k\in\mathcal{K}_s}x^s_{ckr}\leq 1, \forall c\in\mathcal{C},r\in\mathcal{R}\\
&&& x^s_{ckr}\geq 0,\forall c\in\mathcal{C},r\in\mathcal{R}
\end{align}\label{fairopti}
\end{subequations}
\textbf{Static Proportional Sharing (SS)}: It is also known as static proportional splitting. In this resource allocation scheme, resources are partitioned based on the network shares (budgets) of SPs. To be more precise, every SP is allocated a portion of every demanded resource proportional to its budget or shares \emph{i.e.,} $\forall s\in\mathcal{S}, \forall c\in\mathcal{C}$ and $\forall r\in, \mathcal{R}$ $x^s_{cr}=\frac{B_s}{\sum_{s'\in\mathcal{S}}B_{s'}}$. 
\par Now we analyze the efficiency of the proposed scheme \emph{i.e.,} efficiency of ME to the market $\mathcal{M}$ by comparing it with socially optimal allocation. Let $U(SO)$ denotes social optimum, an optimal value of optimization problem defined in \eqref{fairopti}, and $U(ME)$ denotes the value of social welfare under allocation imposed at ME. We consider standard notion of \emph{price of anarchy} defined as $\text{PoA}=\frac{U(SO)-U(ME)}{U(SO)}$. To find the PoA of given market $\mathcal{M}$, we first use the result discussed in the beginning of section that the resource allocation under ME of the market $\mathcal{M}$ can equivalently be computed by solving EG-program and the problem's optimal solution provides the allocation that attains proportional fairness (PF) among the agents. Generally, the attainment of fairness in allocation results in decline in the system's efficiency. The trade-off between efficiency and fairness were well studied in the \cite{Price-of-Fairness} using the notion \emph{price of fairness} (PoF), which is defined as a relative reduction in social welfare under fair allocation compared to the social optimum, $\text{PoF}=\frac{U(SO)-U(PF)}{U(SO)}$. Where $U(SO)$ denoted the value of optimal social welfare function while $U(PF)$ denotes the value of social welfare function at proportional faired allocation. In the following theorem, we derive the bound on PoA for the proposed allocation scheme using results on the bound of  PoF established in \cite{Price-of-Fairness}.
\begin{theorem}
Let the maximum achievable utility of each SP $s\in \mathcal{S}$ in market $\mathcal{M}$ is $\hat{U}_{s}>0$, then price of anarchy is bounded by $\text{PoA}\leq 1- \frac{2\sqrt{S}-1}{S}\frac{\min_{s\in\mathcal{S}}\hat{U}_s}{\max_{s\in\mathcal{S}}\hat{U}_s}-\frac{1}{S}+\frac{\min_{s\in\mathcal{S}}\hat{U}_s}{\sum_{s\in\mathcal{S}}\hat{U}_s}$ and if the maximum achievable utilities of all SPs are equal then $\text{PoA}\leq 1-\frac{2\sqrt{S}-1}{S}$
\end{theorem}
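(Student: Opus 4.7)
The plan is to reduce the Price of Anarchy bound to a Price of Fairness bound and then invoke the result of \cite{Price-of-Fairness}. First, I would observe that by the theorem of Section \ref{CentralizedResAlloc} the ME allocation of $\mathcal{M}$ solves the EG program, whose objective $\sum_{s\in\mathcal{S}} B_s\log U_s$ is exactly the weighted proportional-fairness criterion with weights $B_s$. Hence $U(\text{ME}) = U(\text{PF})$ and therefore $\text{PoA} = \text{PoF}$ for this market, so the claimed inequality is an instance of the PoF upper bound specialized to our setting.

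Next I would verify the hypotheses that make the PoF bound of \cite{Price-of-Fairness} applicable. Let $\mathcal{U} := \{(U_1(x_1),\dots,U_S(x_S)) : x\in\mathcal{X}\}$, where $\mathcal{X}$ is the feasible set defined by \eqref{const2}--\eqref{const3}. Convexity of $\mathcal{U}$ follows because each $U_s$ from \eqref{sputility} is concave in the allocation (the inner Leontief rate $u^s_{ck}=\min_r x^s_{ckr}/d^s_{ckr}$ is concave and the $\alpha$-fairness aggregation preserves concavity on the regime $\alpha_s\in[1,\infty]$ under analysis), and $\mathcal{X}$ is a convex polytope; compactness is immediate from the capacity bounds \eqref{const3}; and $\mathcal{U}$ contains the origin via $x\equiv 0$ (with the standard limiting convention on $U_s$ at zero). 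The assumption $\hat{U}_s>0$ is given.

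The third step is to apply the PoF bound of \cite{Price-of-Fairness} directly: on any convex compact utility set containing the origin and with maximum achievable per-agent utilities $\hat{U}_s>0$, the relative loss of the Nash-welfare solution against the utilitarian optimum is at most
\begin{equation*}
1 - \frac{2\sqrt{S}-1}{S}\cdot\frac{\min_s \hat{U}_s}{\max_s \hat{U}_s} - \frac{1}{S} + \frac{\min_s \hat{U}_s}{\sum_s \hat{U}_s},
\end{equation*}
which is the first claim. For the symmetric case $\hat{U}_s=\hat{U}$ for all $s$, one has $\min_s\hat{U}_s/\max_s\hat{U}_s=1$ and $\min_s\hat{U}_s/\sum_s\hat{U}_s=1/S$, so the last two terms cancel and the bound collapses to $1-(2\sqrt{S}-1)/S$, giving the second claim.

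The main obstacle is the presence of unequal budgets $B_s$: the bound in \cite{Price-of-Fairness} is most directly stated for the unweighted Nash product $\sum_s \log U_s$, whereas our ME maximizes the weighted version $\sum_s B_s \log U_s$. To close this gap I would substitute $V_s := U_s^{B_s}$, turning the weighted log-sum into an unweighted one, and check that the transformed utility set $\mathcal{V}$ remains convex, compact, and contains the origin. The delicate point is convexity of $\mathcal{V}$: since $0<B_s\leq 1$ (as $\sum_s B_s=1$), the map $v\mapsto v^{B_s}$ is concave on $[0,\infty)$, so concavity of $U_s$ is preserved and the image set stays convex. If the weighted case were not reducible in this way, one would instead redo the BFT duality-plus-envelope argument with explicit weights, which is a routine extension that yields the same bound.
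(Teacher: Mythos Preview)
Your plan matches the paper's proof: establish $U(\text{ME})=U(\text{PF})$ via the EG characterization and then quote the PoF bounds of \cite{Price-of-Fairness}. The paper cites their Theorem~2 for the general case and Theorem~1 for the equal-$\hat{U}_s$ case, whereas you obtain the latter by specializing the former; both are fine.

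You are more thorough than the paper in two respects: you actually verify the convexity, compactness, and origin hypotheses on the achievable-utility set, and you flag the weighted-budget mismatch, which the paper's proof does not discuss at all. Your proposed reduction $V_s:=U_s^{B_s}$, however, only unweights the Nash side; the utilitarian benchmark in the transformed variables becomes $\sum_s V_s=\sum_s U_s^{B_s}$, not $\sum_s B_s U_s$, so the PoF bound applied in $\mathcal{V}$-space compares against a different social optimum than the one in \eqref{fairopti}. Your stated fallback of rerunning the argument of \cite{Price-of-Fairness} with explicit weights is the honest way to close this gap; the paper simply leaves it unaddressed.
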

\begin{proof}
Let the price of fairness (PoF) for proportional fairness criteria is defined as 
$\text{PoF}=\frac{U(SO)-U(PF)}{U(SO)}$. According to theorem 2 \cite{Price-of-Fairness} if maximum achievable utility of each agent $s\in \mathcal{S}$ in the market $\mathcal{M}$ is $\hat{U}_{s}>0$ then value of PoF for proportional fairness is bounded by 
\begin{dmath}
    \frac{U(SO)-U(PF)}{U(SO)}\leq 1-\frac{2\sqrt{S}-1}{S}\frac{\min_{s\in\mathcal{S}}\hat{U}_s}{\max_{s\in\mathcal{S}}\hat{U}_s}-\frac{1}{S}+\frac{\min_{s\in\mathcal{S}}\hat{U}_s}{\sum_{s\in\mathcal{S}}\hat{U}_s}
\end{dmath}
We notice that the value of social welfare under proportional fair allocation is equal to the value of social welfare under ME's allocation. Thus replacing $U(PF)$ by $U(ME)$ we get  
\begin{dmath}
    \frac{U(SO)-U(ME)}{U(SO)}\leq 1-\frac{2\sqrt{S}-1}{S}\frac{\min_{s\in\mathcal{S}}\hat{U}_s}{\max_{s\in\mathcal{S}}\hat{U}_s}-\frac{1}{S}+\frac{\min_{s\in\mathcal{S}}\hat{U}_s}{\sum_{s\in\mathcal{S}}\hat{U}_s}    
\end{dmath}

Hence bound on PoA write as 
$$\text{PoA}\leq 1- \frac{2\sqrt{S}-1}{S}\frac{\min_{s\in\mathcal{S}}\hat{U}_s}{\max_{s\in\mathcal{S}}\hat{U}_s}-\frac{1}{S}+\frac{\min_{s\in\mathcal{S}}\hat{U}_s}{\sum_{s\in\mathcal{S}}\hat{U}_s}$$
Similarly if the maximum achievable utilities of all SPs are equal then by the Thm.1 \cite{Price-of-Fairness} 
$$\text{PoA}\leq 1-\frac{2\sqrt{S}-1}{S}$$ 
\end{proof} From the above theorem, we can deduce that the efficiency of the proposed resource allocation scheme decreases with the increase in the number of SPs. Nonetheless, the social optimal allocation offers efficient resource utilization, but at the cost of poor fairness. In the numerical section, we will see that sometimes hardly any resources are allocated for SPs with low marginal gain under the SO allocation scheme. Further, the SO allocation scheme does not guarantee the existence of any equilibrium or stability in the allocation method. Next, we compare the performance of the proposed scheme with the static proportional allocation scheme  
\begin{theorem}
 Under the proposed resource allocation scheme, \emph{i.e} at the ME of the market $\mathcal{M}$ each SP achieves the utility higher than or equal to the utility under static proportional allocation (SS). \label{gainoverstatic}
\end{theorem}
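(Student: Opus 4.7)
The plan is to exploit the defining property of the market equilibrium: at the equilibrium price vector $\mathbf{p}^*$, every SP's ME bundle $\hat{x}_s$ is by definition a utility maximizer over the budget set $\{x_s : \sum_{c,k,r} p^*_{cr} x^s_{ckr} \leq B_s\}$. Consequently, it suffices to show that the SS allocation (or rather, a utility-maximizing redistribution of the SS bundle across classes) lies inside that budget set; then $U_s(\hat{x}_s) \geq U_s(x_s^{SS})$ follows immediately from optimality.

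First, I would establish the price-normalization identity $\sum_{c \in \mathcal{C}} \sum_{r \in \mathcal{R}_c} p^*_{cr} = 1$. Summing the budget-exhaustion identity $\sum_{c,k,r} p^*_{cr} \hat{x}^s_{ckr} = B_s$ (which is exactly step (1) in the proof of the market-equilibrium theorem, carried out in Appendix \ref{proofmarketequi}) over all SPs gives $\sum_s B_s = \sum_{c,r} p^*_{cr} \bigl( \sum_{s,k} \hat{x}^s_{ckr} \bigr)$. By Walras's law in the ME definition, whenever $p^*_{cr} > 0$ the corresponding capacity constraint is tight, so $\sum_{s,k} \hat{x}^s_{ckr} = C_{cr} = 1$; terms with $p^*_{cr} = 0$ contribute nothing. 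Combined with the normalization $\sum_s B_s = 1$, this yields $\sum_{c,r} p^*_{cr} = 1$.

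Next, I would check feasibility of SS at ME prices. Under SS, SP $s$ is handed $B_s$ units of every resource at every cell, and it is then free to partition these among its classes $\mathcal{K}^s_c$ so as to maximize $U_s$; call the resulting bundle $x_s^{SS}$. Its cost at ME prices is
\begin{equation}
\sum_{c,k,r} p^*_{cr} (x_s^{SS})^s_{ckr} = \sum_{c,r} p^*_{cr} \sum_{k} (x_s^{SS})^s_{ckr} = \sum_{c,r} p^*_{cr} \cdot B_s = B_s,
\end{equation}
using the normalization from the previous paragraph. Thus $x_s^{SS}$ is budget-feasible for SP $s$ at prices $\mathbf{p}^*$.

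Finally, since $\hat{x}_s$ maximizes $U_s$ over precisely this budget set, $U_s(\hat{x}_s) \geq U_s(x_s^{SS})$, which is the claim. The only genuinely nontrivial ingredient is the total-price identity $\sum_{c,r} p^*_{cr} = 1$; everything else is an application of the ME optimality condition. I do not expect any technical obstacle beyond invoking the budget-exhaustion lemma and Walras's law, both already in hand.
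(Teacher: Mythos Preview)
Your proposal is correct and follows essentially the same route as the paper: show that the SS bundle is budget-feasible at the ME prices $\hat{\mathbf{p}}$ and then invoke the optimality clause in the definition of ME. The only difference is cosmetic—you spell out the price-normalization identity $\sum_{c,r} p^*_{cr}=1$ explicitly (via budget exhaustion summed over SPs plus Walras's law), whereas the paper obtains the feasibility inequality by substituting $\overline{x}^s_{cr}=B_s/\sum_{s'}B_{s'}$ and implicitly relying on the same identity (derived in Appendix~\ref{proofmarketequi}).
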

\begin{proof}
Let $\left(\hat{\mathbf{x}},\hat{\mathbf{p}}\right)$ be the market equilibrium of market $\mathcal{M}$ then by definition of ME 

$$\forall s\in\mathcal{S}\quad\sum_{ c \in {\cal C}}\sum_{k\in\mathcal{K}^s_c}\sum_{r\in\mathcal{R}_{c}} \hat{p}_{cr} \hat{x}^s_{ckr}=B_{s}$$
and $\forall s\in\mathcal{S}, U_{s}(\hat{x}_s) $ is utility achived by each SP $s$ under ME.  
Let $\overline{\mathbf{x}}$ be the resource allocation under the static proportional allocation scheme. Then to prove the desired results, we first show that allocation $\overline{\mathbf{x}}$ is budget exhausting with respect to price vector $\hat{\mathbf{p}}$ \emph{i.e,}
$$\text{}\forall s\in\mathcal{S}\quad\sum_{ c \in {\cal C}}\sum_{r\in\mathcal{R}_{c}} \hat{p}_{cr} \overline{x}^s_{cr}\leq B_s$$
Just by replacing $\overline{x}^s_{cr}$ with $\frac{B_s}{\sum_{s'\in\mathcal{S}}B_{s'}}$ in above inequality gives us first result. Now as $\hat{\mathbf{x}}$ and $\overline{\mathbf{x}}$ both are feasible and budget exhausting allocation then by definition of market equilibrium 
$$U_{s}\left(\overline{x}_{s}\right)\leq U_{s}(\hat{x}_s), \forall s\in\mathcal{S}\ $$
Hence proves the theorem 
\end{proof}
The above theoretical result proves that the proposed resource scheme achieves better efficiency than the static proportional sharing scheme. Thus proposed FM-based allocation bring off a better arbitrage between the system efficiency and protection to the service level agreement of SPs
\section{Numerical Experiments}
\label{numerical}


 In this section, we numerically evaluate the performance of the proposed allocation scheme. For the simulation purpose, we consider a scenario where the network consists of seven cells, and each cell accommodates three types of resources: CPU, RAM, and Bandwidth (BW), and their available capacities at each cell are 30 Units 126Gb and 40MHz, respectively. Initially, we assume the three service providers owning an equal share of infrastructure support the four types of service classes: CPU-intensive, RAM-intensive, BW-intensive, and Balanced class. The base demand vector for each service class is as described in Table \ref{basedemandtable} \cite{Moro2020JointMO}.
\begin{table}[h]

\centering
\begin{tabular}{|c|c|c|c|}
\hline 
Service Class & CPU & RAM & BW \\ 
\hline 
BW-Intensive & 1CPUs & 8Gb & 10MHz \\ 
\hline 
CPU-Intensive & 4 CPU & 8Gb & 3MHz \\ 
\hline 
RAM-Intensive & 1 CPU & 32Gb & 3MHz \\ 
\hline 
Balanced & 5CPUs & 40Gb & 5MHz \\ 
\hline 
\end{tabular} 
\caption{The base demand vector of service classes}\label{basedemandtable}
\end{table}
For convenience, we assume that each SP supports two user classes. Details of the user class supported by each SP are given in Table \ref{distributionofusers}. For each user class at each cell, a user load is normally distributed with a mean of $100$ and a variance of $50$. We create random $2000$ instances with the above distribution.
\begin{table}[h]

\centering
\begin{tabular}{|c|c|}
\hline 
Service Provider & Classes supported  \\ 
\hline 
SP1 & BW-Intensive, Balanced \\ 
\hline 
SP2 & CPU-Intensive, Balanced\\ 
\hline 
SP3 & RAM-Intensive, Balanced\\ 
\hline 
\end{tabular}
\caption{Class of users supported by service providers }\label{distributionofusers}
\end{table}
\par For each of the generated instances, assuming the SPs employ the different alpha fairness criteria with alpha in the range [0,5], we use SO, ME, and SS-based resource allocation schemes. We calculate the market equilibrium allocations for each generated scenario for our proposed resource allocation scheme, employing the designed bid updating procedure in Section \ref{bidupdating}. We investigate the performance of the proposed allocation scheme by comparing the effect of the alpha-fairness parameter, social welfare, and service providers' network share on the average service rate or utility achieved by their users at different locations.

\subsection{$\alpha$-fairness effect}
\begin{figure*}
\includegraphics[width=0.33\textwidth]{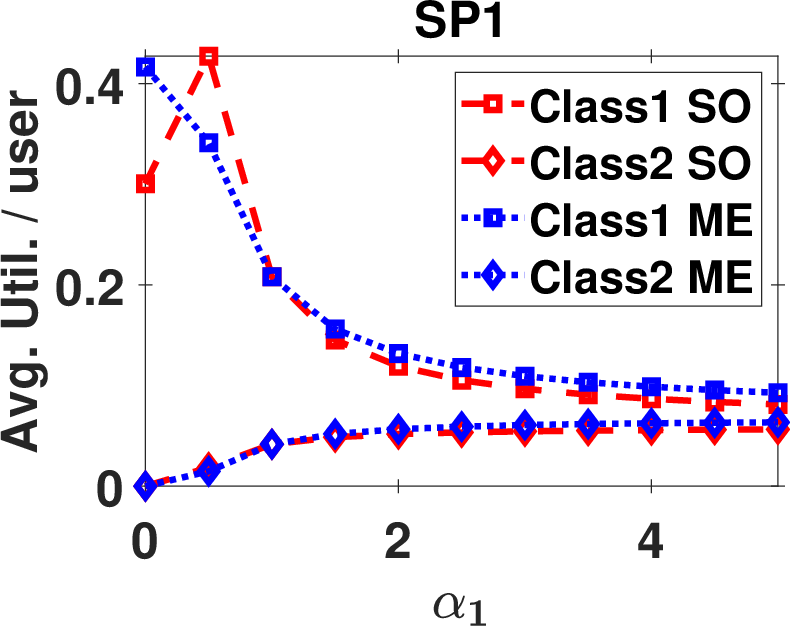}
\includegraphics[width=0.33\textwidth]{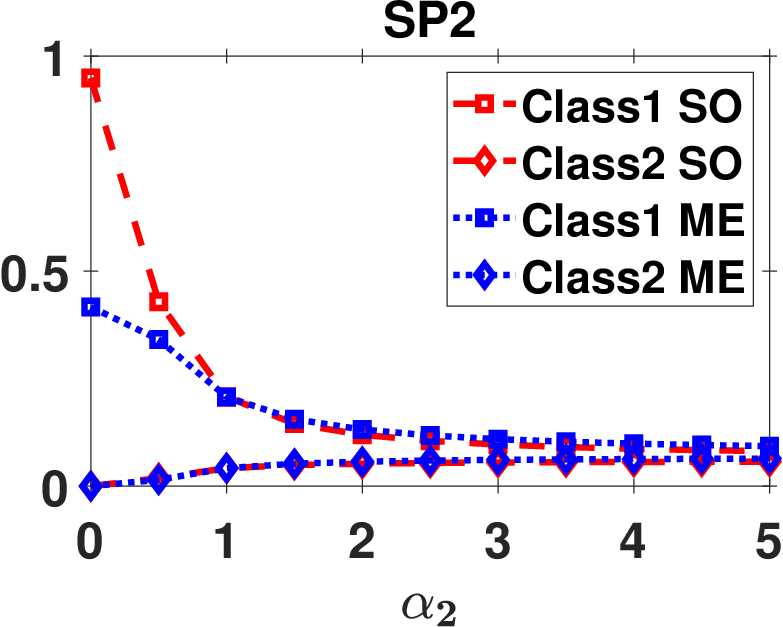}
\includegraphics[width=0.33\textwidth]{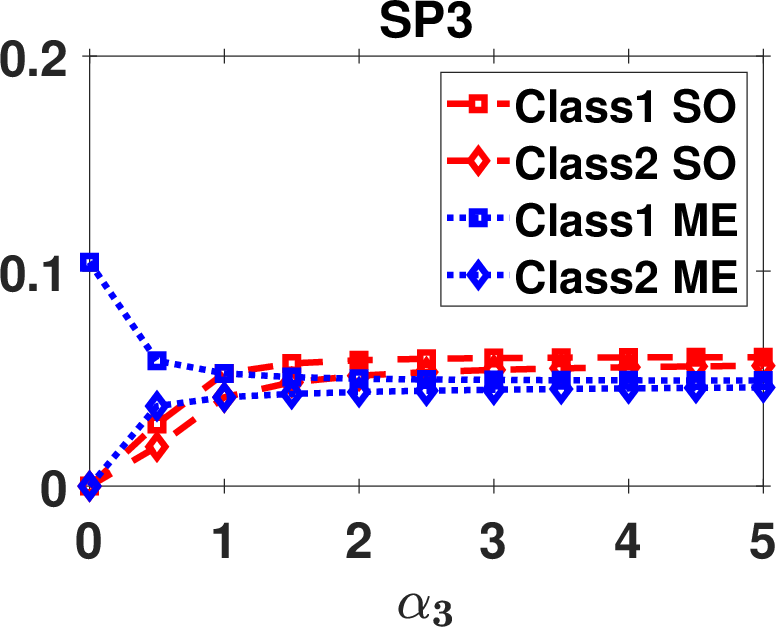}\put(-510,140){(\ref{Alpha_effect}.a)}
\put(-340,140){(\ref{Alpha_effect}.b)}
\put(-170,140){(\ref{Alpha_effect}.c)}
\caption{Plots in Fig. show the effect of alpha fairness criteria ($\alpha_s$) on the average service utility seen by class 1 and class 2 users from Service providers SP1, SP2 and SP3 under SO and ME-based schemes. }

\label{Alpha_effect}
\end{figure*}
Fig.\ref{Alpha_effect} describes the effect of the alpha parameters employed by the service providers on the service rate seen by their different user classes. From plots, we observed that as the value of the alpha parameter increases, the difference between the average service rate experienced by the end-user of two different application classes decreases. More precisely, as the alpha increases, the service rate of the deprived class rises while the service rate of the class that has experienced a high service rate decreases. This phenomenon is coherent with social fairness, where a higher alpha value results in a more equitable distribution of resources, promoting social fairness. However, this is not strictly followed in the case of social optimal allocation. When all the SPs employ the fairness criteria with $\alpha=0$, the SP level fairness and class level fairness among the SPs can not be distinguished separately. As a result, one of the two fairness can dominate the other.

\subsection{Social welfare}
\begin{figure*}
\includegraphics[width=0.33\textwidth]{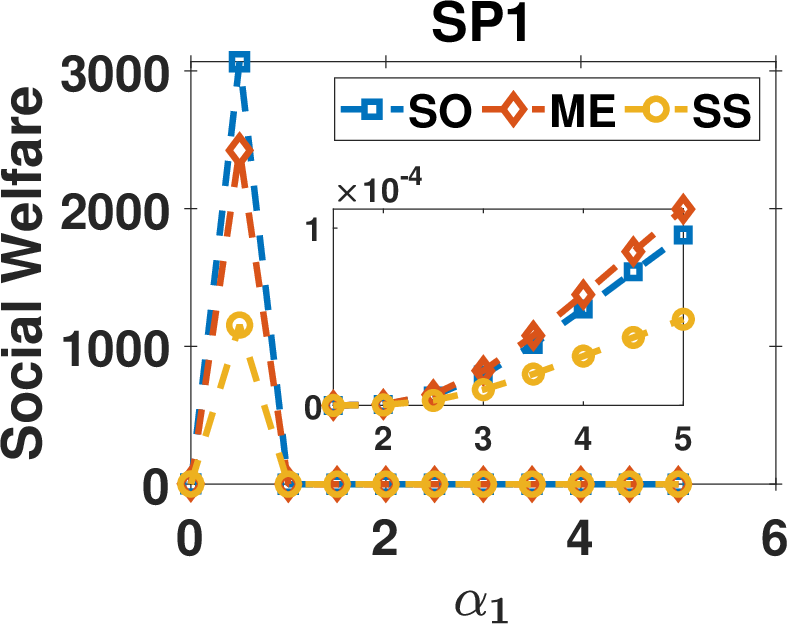}
\includegraphics[width=0.33\textwidth]{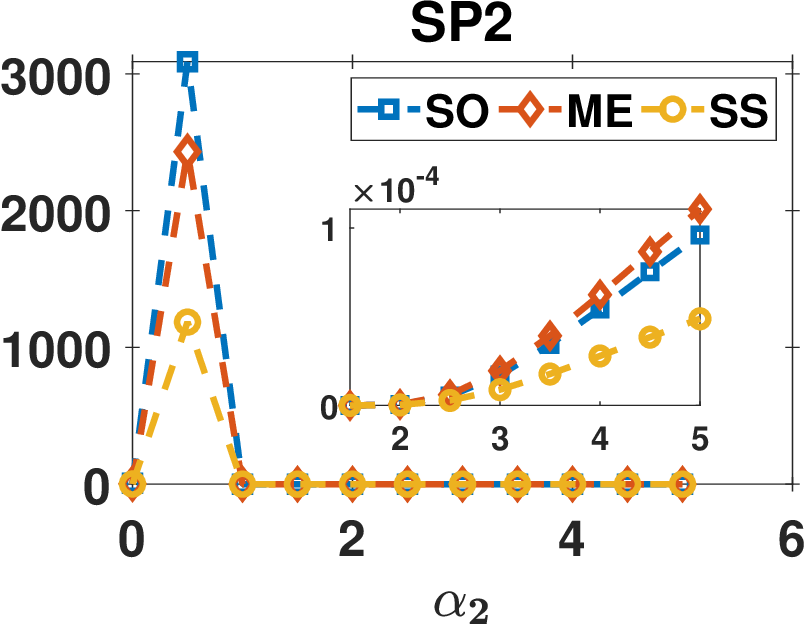}
\includegraphics[width=0.33\textwidth]{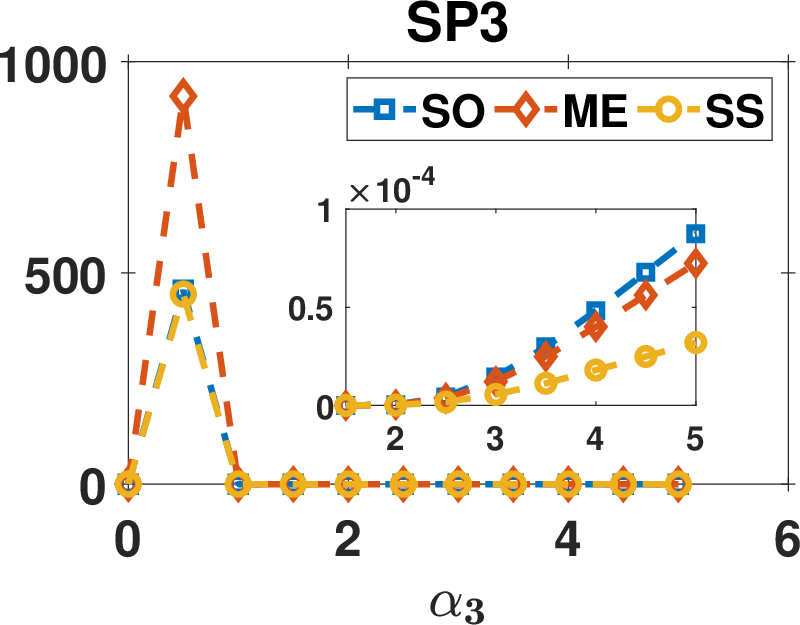}
\put(-510,140){(\ref{soc}.a)}
\put(-340,140){(\ref{soc}.b)}
\put(-170,140){(\ref{soc}.c)}
\caption{Plots show the utilities (social welfare) gained by the SPs under ME-based and SO schemes with values of $\alpha$ parameter ranging from 0-5. }
\label{soc}
\end{figure*}
In Fig.\ref{soc}, we study the performance of our proposed allocation scheme by comparing SPs' utility (i.e. the social welfare) achieved under different alpha fairness criteria through the various baseline resource allocation schemes. We observe that for service providers SP1 and SP2, when they employ the $\alpha$ fairness criteria with $\alpha$ in the range of 0 to 1, the utility achieved through the SO scheme is higher than the utility achieved through the market-based scheme, however,  with $\alpha$ in the range of 1 to 5, the utility achieved through the SO scheme is lower than the utility achieved through the market-based scheme. At the same time, in the case of SP3, the results are exactly opposite. Overall results show that the social welfare achieved in ME-based allocation has less variance than achieved through SO. Moreover, as we have theoretically proven in Theorem~\ref{gainoverstatic}, we observe that social welfare gained by all the SPs through the ME-based scheme is higher than those achieved through the SS scheme.

\subsection{Sensitivity to the budget} 
\begin{figure*}
\includegraphics[width=0.33\textwidth]{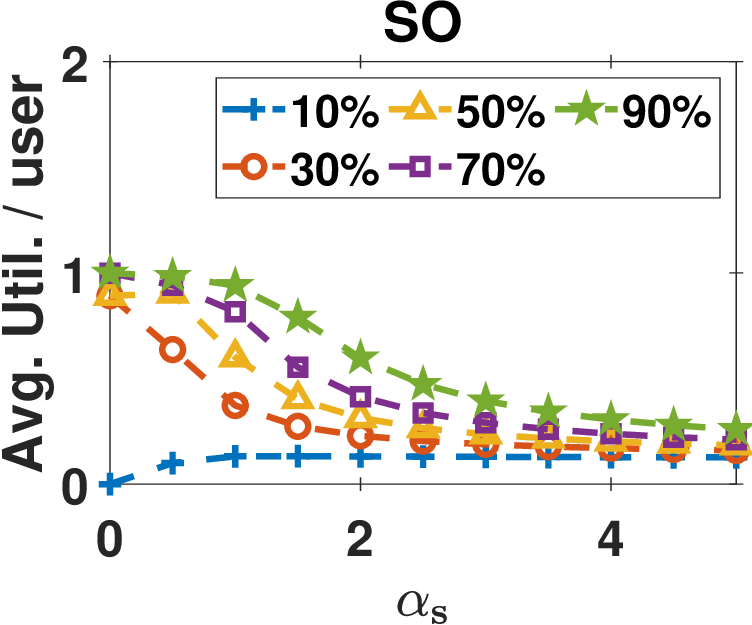}
\includegraphics[width=0.33\textwidth]{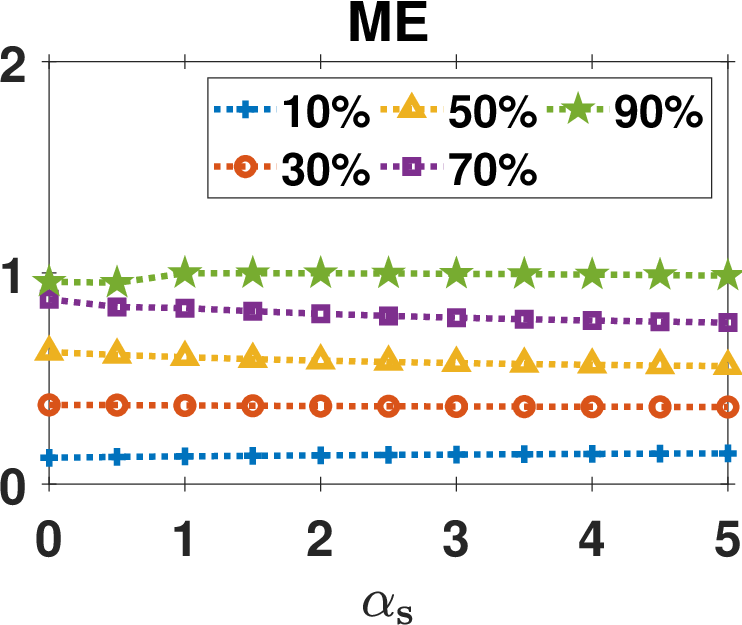}
\includegraphics[width=0.33\textwidth]{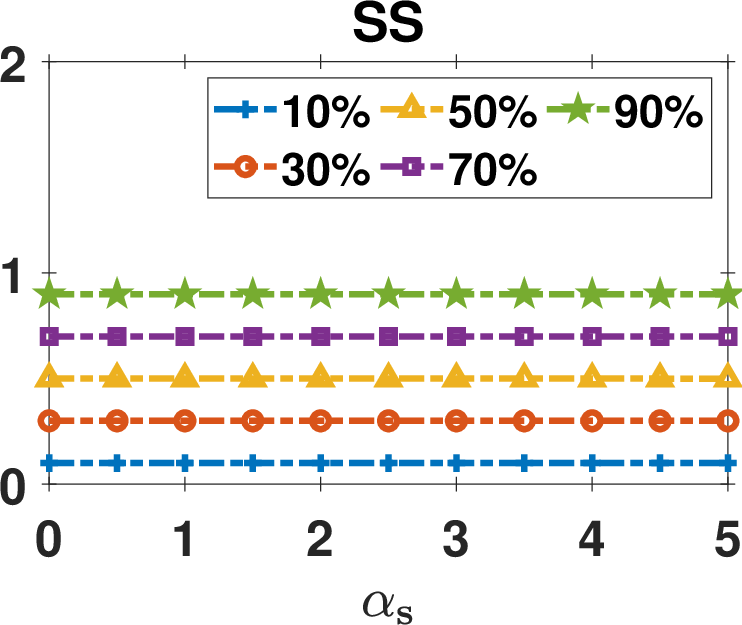}
\put(-510,140){(\ref{sens}.a)}
\put(-340,140){(\ref{sens}.b)}
\put(-170,140){(\ref{sens}.c)}
\caption{Plots describe the sensitivity of the average service rate seen by users of SP1 to the service provider's budget under SO, ME and SS-based resource allocation scheme, where the network share of SP1 varies from $10\%$ to $90\%$ and the respective value of service rate are plotted at different alpha fairness parameter.  }
\label{sens}
\end{figure*}
To study the sensitivity of budgets to different resource allocation schemes, we vary the network share (budget) of a first service provider (SP1) from $10\%$ to $90\%$, while the remaining stake is equally distributed among the reaming service providers. Fig.\ref{sens} describes the effect of the change in the budget on the average service rate seen by the users of service provider SP1. We observe in Fig.(\ref{sens}a) that the socially optimal allocation is very sensitive to the budget. Under the same scheme,   when the SP employs the utilitarian approach, i.e. using the $\alpha_1=0$ and the network share of SP1 is low, end users experience a very low average service rate. However, the sensitivity decreases as the SP employs the higher alpha fairness criteria. From Fig.(\ref{sens}.a), we observe that the Market-based scheme is all most uniformly sensitive to budget change over all the range of alpha parameters. The change in the service rate is linear with the change in the budget. 

\par Finally, we consider the case when SPs employ proportional fairness criteria. The plot in Fig.\ref{fastconvergence} shows the fast convergence of prices to ME from one of the instances at cell $2$ through the designed bid updating scheme.    

\begin{figure}
\includegraphics[width=\columnwidth]{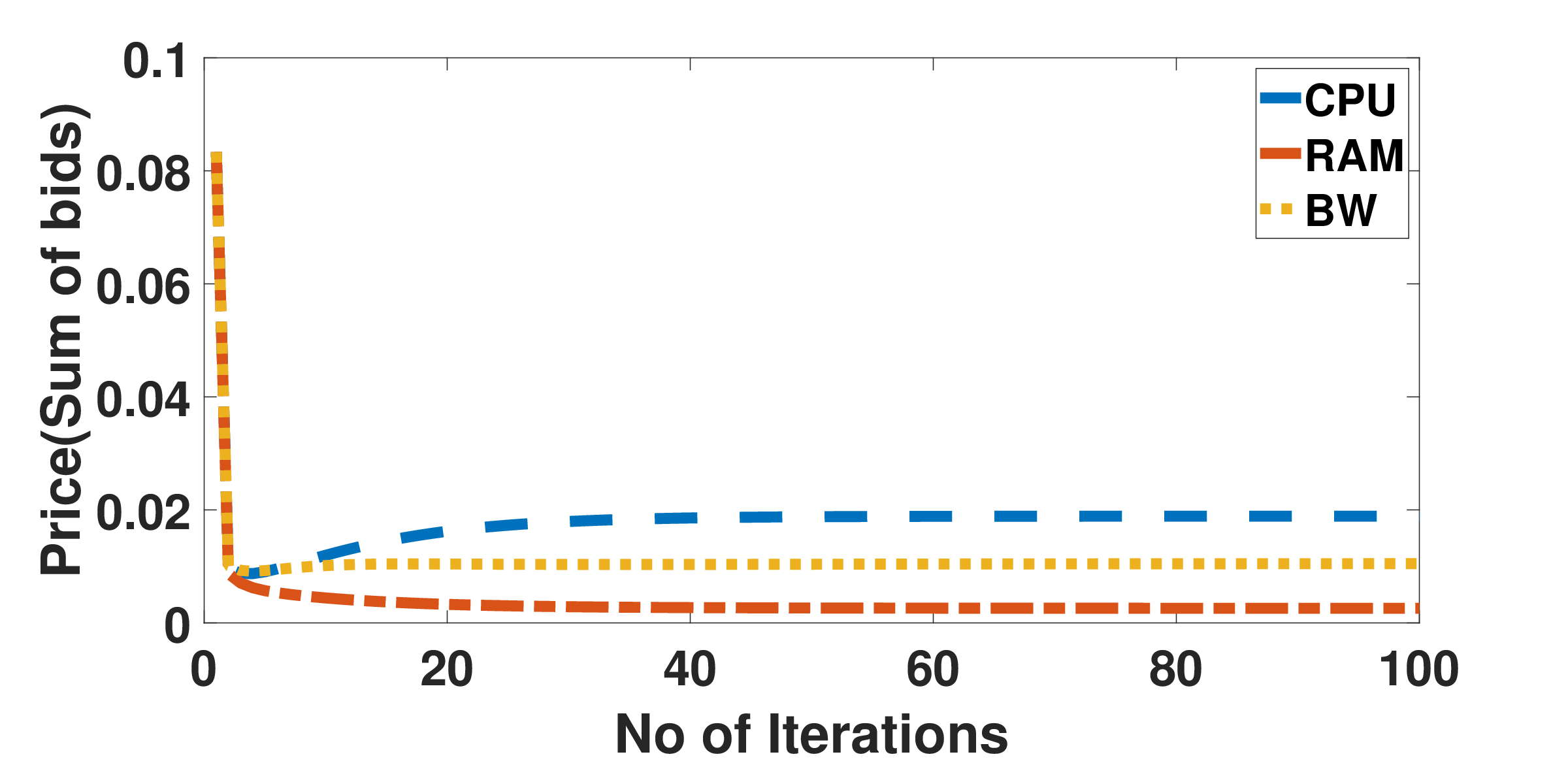}
\caption{The fast convergence of prices to market equilibrium at cell $2$ by SPs via trading post mechanism and when SPs employ proportional fairness criteria.}
\label{fastconvergence}
\end{figure}
\section{Conclusion}
In this work, we have proposed a novel flexible multi resource-sharing scheme based on the Fisher market model and the Trading Post mechanism. Our scheme aims to achieve efficient resource utilization while ensuring multi-level fairness and protection of service-level agreements among service providers. We have demonstrated that the proposed allocation scheme operates at market equilibrium, which can be obtained through a simple convex optimization program. Additionally, we have designed a privacy-preserving budget-distributing scheme that enables service providers to converge to market equilibrium in a decentralized manner. The performance of the proposed scheme has been evaluated through theoretical analysis, comparing it with both social optimal and static proportional sharing scheme. The findings clearly highlight the scheme's ability to effectively balance the utilization of available resources while ensuring adequate protection for individual slices. 

\bibliographystyle{IEEEtranN}
\bibliography{IEEEabrv,slicing}
\appendix
\subsection{Construction of potential function and properties}\label{constructionpot}
To construct the potential function we consider the objective function for dual of optimization program \eqref{galeconvex}
\begin{dmath}
\Upsilon (p)= \max_{ \sum_{c,k,r}x^s_{ckr}p_{cr}=B_s}\left ( \sum_{s\in\mathcal{S}}B_s \log\left ( U_s \right )+\sum_{c\in\mathcal{C}}\sum_{r\in\mathcal{R}_c} p_{cr} \left (1-\sum_{s\in\mathcal{S}}\sum_{k\in\mathcal{K}^s_c}x^s_{crk}\right )\right ) 
\end{dmath}

and in dual goal is to minimize the $\Upsilon (p)$ wrt $p$
%

\begin{dmath}
\Upsilon (p)= \max_{\sum_{c,k,r}x^s_{ckr}p_{cr}=B_s}\left ( \sum_{s}\frac{B_s \log}{(1-\alpha_s)}\times \left(\sum_{c\in\mathcal{C}^s}\sum_{k\in\mathcal{K}^s_{c}} w^s_{ck}\left( \min_{r}\lbrace\frac{x^s_{ckr}}{d^s_{ckr}}\rbrace\right)^{(1-\alpha_s)}\right)\right)\\
+\sum_{c\in\mathcal{C}}\sum_{r\in\mathcal{R}_c} p_{cr} \left (1-\sum_{s}\sum_{k\in\mathcal{K}^s_c}x^s_{crk}\right )
\end{dmath}

\begin{dmath}
\Upsilon (p)= \max_{\sum_{c,k,r}b^s_{ckr}=B_s}\left ( \sum_{s}\frac{B_s}{(1-\alpha_s)}\times\log\left(\sum_{c\in\mathcal{C}^s}\sum_{k\in\mathcal{K}^s_{c}} w^s_{ck}\left( \min_{r}\lbrace\frac{b^s_{ckr}}{p_{cr}d^s_{ckr}}\rbrace\right)^{(1-\alpha_s)}\right)+\sum_{c\in\mathcal{C}}\sum_{r\in\mathcal{R}_c} p_{cr}-\sum_{s}B_{s}\right )\label{append1} 
\end{dmath}

Let given price vector $p$, $b(p)$ be the spending that maximizes the objective in \eqref{append1} subject to constraints $\left(\sum_{c}\sum_{k}b^s_{ckr}=B_{s}\right)\forall s \in\mathcal{S}$ 
\begin{dmath}
\Upsilon (p)=\left ( \sum_{s}\frac{B_s}{(1-\alpha_s)}\times\log\left(\sum_{c\in\mathcal{C}^s}\sum_{k\in\mathcal{K}^s_{c}} w^s_{ck}\left( \min_{r}\lbrace\frac{b^s_{ckr}(p)}{p_{cr}d^s_{ckr}}\rbrace\right)^{(1-\alpha_s)}\right)+\sum_{c\in\mathcal{C}}\sum_{r\in\mathcal{R}_c} p_{cr}-\sum_{s}B_{s}\right ) 
\end{dmath}
We observe that the at optimal point 
\begin{enumerate}
\item for each $s$, $\forall c$ and  $\forall kw^s_{ck}\frac{{u^s_{ck}}^{(1-\alpha_s)}}{b^s_{ck}}=w^s_{ck}\frac{{u^s_{ck'}}^{(1-\alpha_s)}}{b^s_{ck'}}$ 
\item $\frac{b^s_{ckr}(p)}{p_{cr}d^s_{ckr}}=\frac{b^s_{ckr}(p)}{p_{cr}d^s_{ckr}}$
\end{enumerate}
\begin{dmath}
\Upsilon (\mathbf{p})=\sum_{s\in\mathcal{S}}\sum_{c\in\mathcal{C}^s}\sum_{k\in\mathcal{K}^s_{c}}\sum_{r\in\mathcal{R}_{c}}b^s_{ckr}(\mathbf{p})\log\left(\frac{b^s_{ckr}(\mathbf{p})}{p_{cr}d^s_{ckr}}\right)-\frac{1}{(1-\alpha_s)}\sum_{s\in\mathcal{S}}\sum_{c\in\mathcal{C}^s}\sum_{k\in\mathcal{K}^s_{c}}b^s_{ck}(\mathbf{p})\log\left(\frac{b^s_{ck}(\mathbf{p})}{w^s_{ck}} \right)\label{append2}
\end{dmath}
Given $\mathbf{p}$, the function in right hand side of equation \eqref{append2} is convex in $b$ and its minimal point subject to constraints $\left(\sum_{c}\sum_{k}b^s_{ckr}=B_{s}\right)\forall s \in\mathcal{S}$ is $b(\mathbf{p})$

\begin{dmath}
\Upsilon (\mathbf{p})\leq\sum_{s\in\mathcal{S}}\sum_{c\in\mathcal{C}^s}\sum_{k\in\mathcal{K}^s_{c}}\sum_{r\in\mathcal{R}_{c}}b^s_{ckr}\log\left(\frac{b^s_{ckr}}{p_{cr}d^s_{ckr}}\right)-\frac{1}{(1-\alpha_s)}\sum_{s\in\mathcal{S}}\sum_{c\in\mathcal{C}^s}\sum_{k\in\mathcal{K}^s_{c}}b^s_{ck}\log\left(\frac{b^s_{ck}}{w^s_{ck}} \right)
\end{dmath}
The inequality becomes equality if $b_{s}=b_{s}(\mathbf{p})$
\begin{dmath}
\Upsilon (\mathbf{p}(b))\leq \Phi(b)\label{append3}
\end{dmath}
Since we know that the $b^*=b\left(p(b^*)\right)$
\begin{dmath}
\Upsilon (\mathbf{p}(b^*))=\Phi(b^*)\label{append4}
\end{dmath}
From \eqref{append3} and \eqref{append4} we have 
\begin{equation}
\Upsilon(p(b))-\Upsilon(p(b^{*}))\leq \Phi(b)-\Phi(b^{*})
\end{equation}
\subsection{Proof of Lemma \ref{lema}}\label{proofoflema}
\begin{proof}
  We know that the $\Phi(b)$ is a convex function then by definition of a convex function, we have 
\begin{align}
\Phi(b')+\left \langle\nabla \Phi(b'),b-b'  \right \rangle\leq \Phi(b')
\end{align}
Now consider 
\begin{equation}
\Phi(b)-\Phi(b')-\left \langle \nabla \Phi(b'),b-b' \right \rangle
\end{equation}
 putting the value of $\nabla \Phi(b')$ in above equation and after some calculations, we get 
\begin{dmath}
\Phi(b)-\Phi(b')-\left \langle \nabla \Phi(b'),b-b' \right \rangle=\sum_{s:1\leq \alpha_s\leq \infty}KL_{a}(b_s||b'_s)-\sum_{s:1<\alpha_s<\infty}\frac{1}{(1-\alpha_s)}KL_{b}(b^s|{|b^s}')-KL(p||p')
\end{dmath}
and since $KL(p||p')$ is non negative
\begin{dmath}
\Phi(b)-\Phi(b')-\left \langle \nabla \Phi(b'),b-b' \right \rangle\leq\sum_{s:1\leq \alpha_s\leq \infty}KL_{a}(b_s||b'_s)-\sum_{s:1<\alpha_s<\infty}\frac{1}{(1-\alpha_s)}KL_{b}(b_s|{|b_s}')
\end{dmath}
\begin{dmath}
\Phi(b)\leq \Phi(b')+\left \langle \nabla \Phi(b'),b-b' \right \rangle+\leq\sum_{s:1\leq \alpha_s\leq \infty}KL_{a}(b_s||b'_s)-\sum_{s:1<\alpha_s<\infty}\frac{1}{(1-\alpha_s)}KL_{b}(b_s|{|b_s}')
\end{dmath}
which proves that function $\Phi(b)$ is $1$ Bergman convex wrt $d_{g}$ $\eqref{dg}$
\end{proof}
\subsection{Derivation of update rule for $\alpha_s\geq1$}\label{updaterulenegative}
\begin{dmath}
b_{s}(t+1)=\argminA_{\sum_{c}\sum_{k}\sum_{r}{b^s_{ckr}}\leq B_{s}}\left\lbrace \nabla_{b^s} \Phi_{p}\left(b^{s}(t)\right)\left(b^s-b^{s}(t)\right)+KL_a\left(b^s||b^{s}(t)\right)-\frac{1}{(1-\alpha_s)}KL_b\left(b^s||b^{s}(t)\right))\right\rbrace
\end{dmath}
we consider the Lagrangian 
\begin{dmath}
L_s\left(b^s,\gamma\right)=\nabla_{b^s} \Phi_{p}\left(b^{s}(t)\right)\left(b^s-b^{s}(t)\right)+KL_a\left(b^s||b^{s}(t)\right)\\
-\frac{1}{(1-\alpha_s)}KL_b\left(b^s||b^{s}(t)\right))+\gamma\left(\sum_{c}\sum_{k}\sum_{r}{b^s_{ckr}}- B_{s}\right)
\end{dmath}

After applying the first order KKT condition we get
\begin{dmath}
\left[1+\log\left(\frac{b^{s}_{ckr}(t)}{p_{cr}d^s_{ckr}}\right)\right]-\frac{1}{(1-\alpha_s)}\left[1+\log\left(\frac{b^s_{ck}(t)}{w^s_{ck}}\right)\right]+1\\
+\log
\left(\frac{b^s_{ckr}}{b^{s}_{ckr}(t)}\right)-\frac{1}{(1-\alpha_s)}\left[1+\log\left(\frac{b^s_{ck}}{b^{s}_{ck}(t)}\right)\right]+\gamma=0
\end{dmath}
After some calculation we have 
\begin{dmath}
\log\left(b^s_{ckr}\right)=\log\left( p_{cr}d^s_{ckr}\right)+\frac{1}{(1-\alpha_s)}\log\left(\frac{b^s_{ck}}{w^s_{ck}}\right)-\gamma
\end{dmath}
taking exponentila on both side 
\begin{dmath}
b^s_{ckr}=e^{C}p_{cr}d^s_{ckr}\left(\frac{b^s_{ck}}{w^s_{ck}}\right)^{1/(1-\alpha_s)}\label{bsckr}
\end{dmath}
summing over $r$ on both side  
\begin{dmath}
b^s_{ck}\left(\frac{b^s_{ck}}{w^s_{ck}}\right)^{-1/(1-\alpha_s)}=e^{C}\sum_{r}p_{cr}d^s_{ckr}
\end{dmath}
\begin{dmath}
\left(b^s_{ck}\right)^{\frac{1}{(1-\alpha_s)}}=e^{c\frac{1}{-\alpha_s}}\left( w^s_{ck} \right)^{\frac{-1}{(-\alpha_s)(1-\alpha_s)}}\left(\sum_{r}p_{cr}d^s_{ckr}\right)^{\frac{1}{-\alpha_s}}\label{bscreplace}
\end{dmath}
replacing $\left(b^s_{ck}\right)^{\frac{1}{(1-\alpha_s)}}$ from \eqref{bscreplace} in \eqref{bsckr} we have 
\begin{dmath}
b^s_{ckr}=e^{c\frac{-\alpha_s}{(1-\alpha_s)}}p_{cr}d^s_{ckr}\left( w^s_{ck} \right)^{\frac{-1}{(-\alpha_s)}}\left(\sum_{r}p_{cr}d^s_{ckr}\right)^{\frac{1}{-\alpha_s}}
\end{dmath}
summing over all $r$ across all the classes and cell we have 
\begin{dmath}
B_{s}=e^{c\frac{-\alpha_s}{(1-\alpha_s)}}\sum_{c}\sum_{k}\left( w^s_{ck} \right)^{\frac{-1}{(-\alpha_s)}}\left(\sum_{r}p_{cr}d^s_{ckr}\right)^{\frac{(1-\alpha_s)}{-\alpha_s}}
\end{dmath}

\begin{dmath}
b^s_{ckr}=\frac{B_{s}p_{cr}d^s_{ckr}\left( w^s_{ck} \right)^{\frac{-1}{(-\alpha_s)}}\left(\sum_{r}p_{cr}d^s_{ckr}\right)^{\frac{1}{-\alpha_s}}}{\sum_{c}\sum_{k}\left( w^s_{ck} \right)^{\frac{-1}{(-\alpha_s)}}\left(\sum_{r}p_{cr}d^s_{ckr}\right)^{\frac{(1-\alpha_s)}{-\alpha_s}}}
\end{dmath}
\begin{dmath}
b^s_{ckr}=\frac{B_{s}\frac{p_{cr}d^s_{ckr}}{\sum_{r}p_{cr}d^s_{ckr}}\left( w^s_{ck}\right)^{\frac{1}{^{\alpha_s}}}\left(\sum_{r}p_{cr}d^s_{ckr}\right)^{\frac{(1-\alpha_s)}{^{-\alpha_s}}}}{\sum_{c'\in\mathcal{C}}\sum_{k'\in\mathcal{K}^s_{c'}}\left( w^s_{c'k'}\right)^{\frac{1}{^{\alpha_s}}}\left(\sum_{r'}p_{c'r'}d^s_{c'k'r'}\right)^{\frac{(1-\alpha_s)}{^{-\alpha_s}}}}
\end{dmath}

\subsection{Best response}\label{bestresponse}
We consider the Lagrangian for the optimization \eqref{marketprob}
\begin{dmath}
L_s\left(u,b,\lambda,\gamma\right)=\frac{1}{(1-\alpha_s)}\sum_{c}\sum_{k} w^s_{ck} \left({u^s_{ck}}\right)^{(1-\alpha_s)}-\sum_{c}\sum_{k}\sum_{r}\lambda_{ckr}\left(u^{s}_{ck}-\frac{b^{s}_{ckr}}{p_{cr}d^s_{ckr}}\right)-\gamma\left(\sum_{c}\sum_{k}\sum_{r}b^{s}_{ckr}-B_s\right)
\end{dmath}
After applying the first order KKT conditions 
\begin{align}
w^s_{ck}\left(u^s_{ck}\right)^{-\alpha_s}-\sum_{r}\lambda_{ckr}=&0,\forall c\in\mathcal{C},\forall k\in\mathcal{K},\forall\label{bestkkt1}\\
\frac{\lambda_{ckr}}{p_{cr}d^s_{ckr}}=&\gamma\;\forall c\in\mathcal{C},\forall k\in\mathcal{K},\forall r\in\mathcal{R}\label{bestkkt2}
\end{align}
from \eqref{bestkkt1} and \eqref{bestkkt2}
\begin{dmath}
w^s_{ck}\left({u^s_{ck}}\right)^{-\alpha_s}=\gamma\sum_{r}p_{cr}d^s_{ckr}\\
\end{dmath}
\begin{dmath}
{u^s_{ck}}=\left(\gamma\right)^{\frac{1}{^{-\alpha_s}}}\left( w^s_{ck}\right)^{\frac{1}{^{\alpha_s}}}\left(\sum_{r}p_{cr}d^s_{ckr}\right)^{\frac{1}{^{-\alpha_s}}}
\end{dmath}
we know that at best response $u^s_{ck}=\frac{b^s_{ckr}}{p_{cr}d^s_{ckr}}$
\begin{dmath}
\frac{b^s_{ckr}}{p_{kr}d^s_{ckr}}=\gamma^{\frac{1}{^{-\alpha_s}}}\left( w^s_{ck}\right)^{\frac{1}{^{\alpha_s}}}\left(\sum_{r}p_{cr}d^s_{ckr}\right)^{\frac{1}{^{-\alpha_s}}}
\end{dmath}
\begin{dmath}
b^s_{ckr}=\gamma^{\frac{1}{^{-\alpha_s}}}\left( w^s_{ck}\right)^{\frac{1}{^{\alpha_s}}}p_{cr}d^s_{ckr}\left(\sum_{r}p_{kr}d^s_{ckr}\right)^{\frac{1}{^{-\alpha_s}}}\label{bsckr4}
\end{dmath}
summing over $c\in\mathcal{C}$, $k\in\mathcal{K}$ and $r\in\mathcal{KR}$
\begin{dmath}
B_{s}=\gamma^{\frac{1}{^{-\alpha_s}}}\sum_{c}\sum_{k}\left( w^s_{ck}\right)^{\frac{1}{^{\alpha_s}}}\left(\sum_{r}p_{cr}d^s_{ckr}\right)^{\frac{(1-\alpha_s)}{^{-\alpha_s}}}\label{bsck4}
\end{dmath}
Replacing the value of $\gamma^{\frac{1}{^{-\alpha_s}}}$ from \eqref{bsck4} in \eqref{bsckr4}
\begin{dmath}
b^s_{ckr}=\frac{B_{s}p_{cr}d^s_{ckr}\left( w^s_{ck}\right)^{\frac{1}{^{\alpha_s}}}\left(\sum_{r}p_{cr}d^s_{ckr}\right)^{\frac{1}{^{-\alpha_s}}}}{\sum_{c}\sum_{k}\left( w^s_{ck}\right)^{\frac{1}{^{\alpha_s}}}\left(\sum_{r}p_{cr}d^s_{ckr}\right)^{\frac{(1-\alpha_s)}{^{-\alpha_s}}}}
\end{dmath}

\begin{dmath}
b^s_{ckr}=\frac{B_{s}\frac{p_{cr}d^s_{ckr}}{\sum_{r}p_{cr}d^s_{ckr}}\left( w^s_{ck}\right)^{\frac{1}{^{\alpha_s}}}\left(\sum_{r}p_{cr}d^s_{ckr}\right)^{\frac{(1-\alpha_s)}{^{-\alpha_s}}}}{\sum_{c}\sum_{k}\left( w^s_{ck}\right)^{\frac{1}{^{\alpha_s}}}\left(\sum_{r}p_{cr}d^s_{ckr}\right)^{\frac{(1-\alpha_s)}{^{-\alpha_s}}}}
\end{dmath}
\subsection{Market equilibrium}\label{proofmarketequi}

\begin{proof}
Consider the Lagrangian of EG problem
\begin{dmath}
L(u,x,\lambda,p)=\frac{B_{s}}{(1-\alpha_s)}\log\left(\sum_{c\in\mathcal{C}}\sum_{k\in\mathcal{K}^s_c}w^s_{ck}\left(u^s_{ck}\right)^{(1-\alpha_s)}\right)-\sum_{s}\sum_{c}\sum_{\mathcal{K}_{c}}\sum_{r}\lambda^s_{ckr}\left(u^s_{ck}-\frac{x^s_{ckr}}{d^s_{kr}}\right)\\
-\sum_{c}\sum_{r}p_{cr}\left(\sum_{s}\sum_{k}x^s_{ckr}-1\right)
\end{dmath}
After applying first order  KKT condition we have
\begin{align}
\frac{B_{s}w^s_{ck}\left(u^s_{ck}\right)^{-\alpha_s}}{\left(\sum_{c\in\mathcal{C}}\sum_{k\in\mathcal{K}^s_c}w^s_{ck}\left(u^s_{ck}\right)^{(1-\alpha_s)}\right)}-\sum_{r\in\mathcal{R}_c}\lambda^s_{ckr}=0\\
\forall s\in\mathcal{S},\forall c\in\mathcal{C},\forall k\in\mathcal{K}\nonumber
\end{align}
and
\begin{align}
&\frac{\lambda^s_{ckr}}{d^s_{kr}}-p_{cr}=0 \;\forall s\in\mathcal{S},\forall c\in\mathcal{C},\forall k\in\mathcal{K},r\in\mathcal{R}_
{c}\\
&u^s_{ck}=\frac{x^s_{ckr}}{d^s_{ckr}}  \;\forall s\in\mathcal{S},\forall c\in\mathcal{C},\forall k\in\mathcal{K},r\in\mathcal{R}_
{c}\\
&p_{cr}\left(\sum_{s}\sum_{k}x^s_{ckr}-1\right)=0\;\forall c\in\mathcal{C}\;\forall r\in\mathcal{R}_{c}\\
&\lambda^s_{ckr}\geq 0,\forall s\in\mathcal{S},\forall c\in\mathcal{C},\forall r\in\mathcal{R}_{c}\\
&p_{cr} \geq 0 \;\forall c\in\mathcal{C},\forall r\in\mathcal{R}_{c}
\end{align}
Form we have
\begin{dmath}
B_{s}\frac{w^s_{ck}\left(u^s_{ck}\right)^{-\alpha_s}}{\left(\sum_{c\in\mathcal{C}}\sum_{k\in\mathcal{K}^s_c}w^s_{ck}\left(u^s_{ck}\right)^{(1-\alpha_s)}\right)}=\sum_{r}p_{cr}d^s_{ckr}
\end{dmath}
From replacing $x^s_{ck}$ with $u^s_{ck}d^s_{ckr}$ we have
\begin{dmath}
B_{s}\frac{w^s_{ck}\left(u^s_{ck}\right)^{(1-\alpha_s)}}{\left(\sum_{c\in\mathcal{C}}\sum_{k\in\mathcal{K}^s_c}w^s_{ck}\left(u^s_{ck}\right)^{(1-\alpha_s)}\right)}=\sum_{r}p_{cr}x^s_{ckr}
\end{dmath}
summing over 
\begin{dmath}
B_{s}=\sum_{c}\sum_{k\in\mathcal{K}^s_c}\sum_{r}p_{cr}x^s_{ckr}
\end{dmath}
summing over the $s$ we get 
\begin{dmath}
\sum_{s}\sum_{c}\sum_{r}p_{cr}=1;
\end{dmath}
\end{proof}

\end{document}